\documentclass[11pt]{article}

\usepackage[letterpaper,margin=1in]{geometry}
\usepackage{fullpage}
\usepackage{amsmath}
\usepackage{amsthm}
\usepackage{multirow}
\usepackage{footmisc}
\usepackage{amssymb}
\usepackage{graphicx}
\usepackage{color}
\usepackage{hyperref}
\usepackage{enumitem}

\usepackage{algorithm}
\usepackage{algpseudocode}
\usepackage[linesnumbered,ruled,vlined,algo2e]{algorithm2e}

\makeatletter
\def\namedlabel#1#2{\begingroup
    #2%
    \def\@currentlabel{#2}%
    \phantomsection\label{#1}\endgroup
}

\newcommand{\framework}{\textbf{\texttt{refix-leader}}}
\newcommand{\ul}{\textsc{unique-leader}}

\newtheorem{theorem}{Theorem}
\newtheorem{lemma}{Lemma}

\newtheorem{definition}{Definition}

\newtheorem{remark}{Remark}
\newtheorem{claim}{Claim}

\begin{document}

\title{Short Graph Sketches Suffice for \\ Error-resilient Leader Verification in CONGEST}

\author{
Pawe\l\ Garncarek$^{1}$\thanks{\texttt{pgarn@cs.uni.wroc.pl}},
Tomasz Jurdzi\'nski$^{1}$\thanks{\texttt{tju@cs.uni.wroc.pl}},
Dariusz Kowalski$^{2}$\thanks{\texttt{darek.liv@gmail.com}},
Subhajit Pramanick$^{1}$\thanks{\texttt{suvo.iitg17@gmail.com}}
}

\date{
$^{1}$Institute of Computer Science, University of Wroc\l aw, Poland\\
$^{2}$Department of Computer Science, Augusta University, USA
}

\maketitle

\begin{abstract}
Locally Checkable Proofs (LCPs) enable the verification of global graph
properties from locally checkable certificates assigned by a prover. This
framework has recently been extended to Locally Checkable Proofs-with-Errors
(LCPE), in which an adversary may corrupt some of the certificates. Existing
LCPE algorithms are designed for the LOCAL model, whose unbounded
communication makes them unsuitable for direct implementation in the
(logarithmically) bandwidth-restricted CONGEST model.

In this paper, we initiate the study of efficient CONGEST
implementations of LCPE through the \textsc{unique-leader} verification problem
on trees. The main challenge is that tolerating $\varepsilon$ certificate errors
requires each node to reason about its $(2\varepsilon+1)$-hop neighborhood, whose
exact topology cannot, in the worst case, be communicated efficiently in
\textsc{Congest} -- reconstructing it naively costs up to
$O(\Delta^{2\varepsilon+1}\log n)$ bits. To overcome this bottleneck, we
introduce \emph{local graph sketches}, together with the notions of
\emph{imagined trees} and \emph{imagined certifications}, which compactly
encode---in only $O(\varepsilon^2\log n)$ bits at each node---precisely the
information a node needs to decide. Using these techniques, we design an
algorithm that tolerates up to $\varepsilon$ adversarial certificate errors and
computes the required sketches in $O(\varepsilon^2)$ communication rounds in the
CONGEST model.

We complement this with a matching impossibility result: even in the strictly
more powerful LOCAL model, and even with certificates of unbounded size,
no verification scheme with view distance at most $\varepsilon$ can tolerate
$\varepsilon$ adversarial errors. Since LOCAL is strictly stronger than
CONGEST, this lower bound carries over immediately, showing that a view
distance beyond $\varepsilon$ -- and hence the wider neighborhood our algorithm
summarizes -- is unavoidable.

\vspace{0.2cm}
\noindent \textbf{Keywords:} Verification problem, Leader election, Locally Checkable Proof, Local Certification, Error-resilience, CONGEST Model.
\end{abstract}
\newpage


\section{Introduction}
\label{sec:introduction}

\subsection{Background and Motivation}
\label{subsec:background}
In general, nodes in a distributed system do not have a global view of the network. 
Instead, each node can access only a bounded neighborhood around itself (sometimes only its direct neighborhood), whereas many properties of interest are inherently global. 
Examples include determining whether the network is a tree, whether a graph is bipartite, whether a computed structure is a valid spanning tree, or whether the system remains consistent after transient faults.
Such verification tasks arise in many settings, including self-stabilizing and fault-tolerant systems. For many global properties, however, purely local verification is impossible without additional information, while verifying them from scratch requires excessive communication. This motivates equipping each node with a small amount of auxiliary information that makes local verification possible. 

The notion of locally verifying graph properties with the help of auxiliary information assigned to the nodes is captured by the framework of \emph{proof labeling schemes} (PLS), introduced by Korman, Kutten, and Peleg \cite{10.1145/1073814.1073817}. In this framework, a \emph{prover} first assigns a certificate (a binary string) to each node. Subsequently, a distributed verification algorithm, called the \emph{verifier}, runs at every node and uses its own certificate, together with the certificates of the neighboring nodes, to output a binary decision: either \emph{accept} or \emph{reject}.
The aim of the verifier is to distinguish between the yes and no instances of the verification problem.
Later G{\"o}{\"o}s and Suomela \cite{goos2016locally} generalized the idea of proof labeling schemes under the name of \emph{locally checkable proofs} (LCP), where the main difference is that a node (rather the verifier) can inspect a constant radius (instead of just 1) neighborhood and that the node can access the certificates and identifier of the nodes within that neighborhood.
This constant radius is called the \emph{view distance} of the node.
A survey by Feuilloley \cite{feuilloley2021introduction} consolidates these notions.
This framework on local verification has been extended in various directions, including PLSs with restricted provers \cite{DBLP:conf/wdag/EmekGK22}, randomized \cite{10.1145/2767386.2767421} and quantum PLSs \cite{fraigniaud_et_al:LIPIcs.ITCS.2021.28}, PLSs with global proofs in addition to local ones \cite{feuilloley_et_al:LIPIcs.DISC.2018.25}, and many more.

The model underlying this long line of work is the well-known LOCAL model \cite{peleg2000distributed}, in the sense that a verifier with a view distance $k$ can be implemented in $k$ synchronous communication rounds (see \cite{goos2016locally} by G{\"o}{\"o}s and Suomela) of the LOCAL model. 
In this context, the metric of interest is the certificate size (i.e., the maximum number of required bits given to any node) considered across the literature for various graph properties on different families of graphs \cite{bousquet_et_al:LIPIcs.DISC.2025.18,10.1145/3519270.3538416,fraigniaud_et_al:LIPIcs.DISC.2023.20,DBLP:journals/algorithmica/FraigniaudMRT24,garncarek2026distributed}.
The certificate size depends on the property being verified.
Constant-size is known for several problems \cite{goos2016locally}, whereas an $O(\log n)$ size suffices for a wide range of verification problems, including unique leader, cycle-freeness, spanning tree, non-bipartiteness, minor-freeness and many more \cite{10.1007/978-3-319-12340-0_2,bousquet2024local,ESPERET202268,10.1145/3382734.3404505,FEUILLOLEY20239,feuilloley_et_al:LIPIcs.DISC.2018.25,feuilloley2025proving,10.1145/2499228,goos2016locally,ostrovsky2017space}.
On the other hand, some verification problems require larger proofs, such as $O(\log n\log W)$ bits (edge weights in $[1, W]$) for MST verification by Korman and Kutten \cite{DBLP:journals/dc/KormanK07} and $O(n\log n)$ bits for diameter $k$ verification by Censor-Hillel, Paz and Perry \cite{CENSORHILLEL2020112}.

\paragraph{Erroneous Certificates.} In this context, always expecting trustworthy certificates may be limiting in practice. Since certificates must first be generated, they may be produced using AI/ML-based techniques or randomized algorithms (e.g., Monte Carlo algorithms), both of which may introduce errors.
In our context, an error could refer to a certificate that differs from the one assigned by a trustworthy prover.
More generally, the verifier has no control over how the certificates are generated or transmitted, and therefore cannot always rely on their correctness. Consequently, it is natural to seek verification schemes that continue to operate correctly even when (possibly a bounded number of) certificates are erroneous. This motivates the study of verification schemes that remain robust even when the certificates are partially inaccurate.
Similar motivations have recently appeared in several works on learning-augmented algorithms, where the auxiliary information provided to the nodes may be imperfect or unreliable \cite{10.1145/3732772.3733518,10.1145/3732772.3733530,lykouris2021competitive}. 
This direction has recently attracted considerable attention across several research communities.
In fact, Boyar, Ellen, and Larsen in \cite{10.1145/3732772.3733530} initiated the study of distributed graph algorithms in this setting, although their model (which they refer to as the prediction model) differs somewhat from ours in this paper.

\paragraph{LOCAL vs CONGEST.} 
To the best of our knowledge, none of these works considers the CONGEST model -- one of the classical models in distributed computing, which assumes per-edge bandwidth. The LOCAL model, on the other hand, places no limit on message size, and thus a node can collect its full $k$-hop neighborhood in $k$ rounds, whereas the CONGEST model caps every edge at $O(\log n)$ bits per round.
This gap is largely immaterial for certification so far, since the verifier for essentially every graph property studied in the literature inspects only a node's immediate ($1$-hop) neighborhood.
The situation changes once we allow an adversary to tamper with the certificates that the (trustworthy) prover assigns.
With such corruption, a $1$-hop view can be too unreliable. 
This motivates a more general formulation, in which the adversary may modify the certificates of at most $\varepsilon$ nodes within the $k$-hop neighborhood of each node.
Now, a node can no longer rely on its immediate neighborhood alone and must examine a wider neighborhood.
Collecting a $k$-hop neighborhood costs merely $k$ rounds in LOCAL, but the $k$-hop neighborhood contains $O(\Delta^k \log n)$ bits of information, which in CONGEST model, requires round complexity up to $O(\Delta^k)$ in the worst case ($\Delta$ is the maximum degree).
The central challenge of this work is to verify the desired property correctly 
using substantially smaller $O(\varepsilon^2 \log n)$ information
than this naive $O(\Delta^k \log n)$ bound.


\subsection{Our Model and Definitions}
\label{subsec:model}

\paragraph{Graphs.} A graph $G(V,E)$ in our consideration is a tree with $|V| = n$ nodes.
For a node $v$, we use the notation $N_{k}(v)$, called the \emph{$k$-hop neighborhood of $v$}, to denote the set of all nodes within the distance $k$ hops from $v$ (the shortest path from $v$ to any node in $N_k(v)$ has  $\leq k$ edges), including $v$ itself.
We assume that the nodes of a tree possess unique identifiers (or IDs for short) from the set $\{1,2, \dots, poly(n)\}$ on $O(\log n)$ bits.
However, our techniques do not rely on the availability of node IDs. 
It is sufficient to assume that each node $v$ has a distinct local port numbering from the set $\{1,2, \dots, deg(v)\}$ to its incident edges. 
Note that the port numbers assigned to the two endpoints of an edge are independent and need not coincide.

\paragraph{Local Verification with Certificates.} 
A convenient way to describe the verification process is through a \emph{prover-verifier} paradigm. 
The prover gives a \emph{certificate assignment}\footnote{In some of the previous works, this certificate assignment function is also called a \emph{proof} on $G$.} $C$ on a graph $G$, which is a function $C: V \rightarrow \{0,1\}^*$ that associates with every node a \emph{binary string}.
A binary string assigned to a node $v$ is called the \emph{certificate of} $v$. 
The prover's objective is to have all nodes accept, irrespective of whether the instance is a yes-instance or a no-instance.

A \emph{verifier} is a distributed algorithm $\mathcal{A}$ running at each node that takes the triple $(G, L, v)$ as input and outputs either $0$ or $1$.
For a natural number $k$ and a node $v \in G$, let $G[N_k(v)]$ be the subgraph induced by the nodes in $N_k(v)$ and $C[N_k(v)]: N_k(v) \rightarrow \{0,1\}^*$ be the restriction of $C$ to $N_k(v)$.
A verifier $\mathcal{A}$ is called a \emph{local verifier} if there exists a $k$ such that $\mathcal{A}(G, L, v) = \mathcal{A}(G[N_k(v)], L[N_k(v)], v)$ for all $G, L$ and $v$, which intuitively means that the output of a node $v$ depends only the information (inputs or certificates) of the nodes within its $k$-hop neighborhood.
We call this $k$ the \emph{view distance} of $v$ (which, in the distributed setting, is equivalently the \emph{view distance} of $\mathcal{A}$).

In the context of verifying a graph property (recall that a graph property is a set of graphs that is closed under isomorphism), we say that a graph property $\mathcal{P}$ admits a \emph{locally checkable proof labeling scheme} if there is a prover-verifier pair $(f,\mathcal{A})$, where the prover $f$ gives a certificate assignment function $C$ for each $G \in \mathcal{P}$ and $\mathcal{A}$ is a local verifier (with a view distance $k$) such that the following two properties hold.
\textbf{\textit{Completeness}:} If $G \in \mathcal{P}$, there should be a certificate assignment function $C$ on $G$ provided by the prover $f$ (i.e., $f(G) = C$) such that $\mathcal{A}(G[N_k(v)], C[N_k(v), v]) = 1$ for all $v$.
\textbf{\textit{Soundness}:} If $G \notin \mathcal{P}$, then for every certificate assignment $C$ on $G$, there must be at least one node $v$ such that $\mathcal{A}(G[N_k(v)], C[N_k(v)], v) = 0$.

For our convenience, we say that the node $v$ \emph{accepts} (resp. \emph{rejects}) when the verifier $\mathcal{A}$ outputs $1$ (resp. $0$).
We emphasize the fact that when $G$ does not satisfy the property $\mathcal{P}$, there must exist at least one rejecting node for any certificate assignment function.

\paragraph{Oracle vs Adversary (The Error Model).} 
To introduce the concept of certificates with errors in a convenient way, we distinguish two entities that assign certificates to nodes: the \emph{oracle} and the \emph{adversary}.
So far in the above discussion of the model, the prover $f$ is a trustworthy oracle.
We denote the certificate assignment function provided by an oracular prover by $C_O$ and the corresponding verifier by $\mathcal{A}_0$.
We refer to the pair $(C_O, \mathcal{A}_0)$ as the \emph{oracular scheme}.
 
On the other hand, we consider a setting in which an adversary can modify the certificates assigned to the graph nodes by an oracular prover to potentially cause all nodes to accept even when $G\notin \mathcal{P}$ or some nodes to reject when $G \in \mathcal{P}$.
We represent the certificate assignment function after the adversarial modification by $C_{adv}$.

In our paper, we assume that $C_{adv}$ differs from $C_O$ in at most $\varepsilon$ nodes within $N_{2\varepsilon+1}(v)$ for every node $v$, i.e., the adversary is allowed to introduce at most $\varepsilon$ modifications within the $(2\varepsilon+1)$-hop neighborhood of each node $v$.
We call each such modification an \emph{error}.
Under this adversarial model, the definition of completeness must be modified to account for certificates with errors, whereas the definition of soundness remains unchanged. 
 \begin{itemize}
     \item \textbf{\textit{Error-resilient Completeness}:} If $G \in \mathcal{P}$, there is a certificate assignment $C_O$ on $G$ such that for every $C_{adv}$ differing from $C_O$  in at most $\varepsilon$ nodes within $N_{2\varepsilon+1}(v)$ for every node $v$, we have $\mathcal{A}(G[N_{k}(v)], C_{adv}[N_k(v)], v) = 1$ for all $v$.
 \end{itemize}

In this setting, we refer to the pair $(C_{adv}, \mathcal{A})$ as the \emph{adversarial scheme}.
In our description, we use the term \emph{\textbf{the oracular setting}} to mean the error-free setting in which certificates are not modified by the adversary, and the term \emph{\textbf{the adversarial setting}} to mean the setting in which certificates might be influenced by the adversary.

\subsection{The Problem Definition: \textmd{\ul{}}}

In this paper, we consider the popular verification problem of determining whether a graph has a unique leader \cite{feuilloley2025proving,goos2016locally}.
For simplicity, we assume that each node is initially assigned a binary input, rather than a unique identifier. These inputs are part of the problem instance and should not be confused with certificates, which are assigned later by the prover.
The leader node in the graph gets $\mathcal{L}$, and the rest of the nodes get $\mathcal{N}$.
In this paper, we study the property $\mathcal{P}=\ul{}$. The corresponding yes and no instances are defined as follows. An instance is a yes-instance for \ul{} if exactly one node in the graph is assigned input bit $\mathcal{L}$. An instance is a no-instance whenever either no node or more than one node is assigned input bit $\mathcal{L}$.
We say that a scheme $(C, \mathcal{A})$ solves \ul{} if both completeness and soundness are satisfied.

\paragraph{Warm-up: \textmd{\ul} in Oracular Setting.}
It is straightforward to verify \ul{} using the following known scheme, which we mention briefly.

The certificate assignment function $C_O(v) = 0$ if $v$ is the node with the input bit $1$ (leader).
For any other node $v$ (non-leader), $C_O(v)$ is the breadth-first search (BFS) distance to the leader.  

We use $\mathcal{A}_0$ to denote the verification algorithm executed at each node $v$ that uses the certificates of $v$ and its neighboring nodes.
More precisely, a node $v$ with input bit $0$ checks whether exactly one neighbor, namely its \emph{parent}, has certificate $C_O(v)-1$, while every other neighbor, namely its \emph{children}, has certificate $C_O(v)+1$.
If $v$ is the leader (i.e., the unique node with input bit $1$), it has no parent, and therefore all of its neighbors must have certificate $C_O(v)+1$. A node accepts if the corresponding condition is satisfied; otherwise, it rejects.

The correctness of the above scheme is well known and straightforward.
Hence, we do not include it in our paper. 
As $\mathcal{A}_0$ requires only the certificates of neighboring nodes, the distinction between the LOCAL and CONGEST models is immaterial in this setting.
Also, observe that the above scheme uses certificates of size $O(\log n)$.
In fact, this certificate size is optimal for an arbitrary graph with a matching lower bound, established by G{\"o}{\"o}s and Suomela \cite{goos2016locally}.
We refer to $\mathcal{A}_0$ as the \emph{base verification algorithm} henceforth (when $\varepsilon = 0$).

\subsection{Our Results and Roadmap}
\label{subsec:our-results}

We initiate the study of efficient CONGEST implementations of error-resilient local verification on the \ul{} problem in this paper.
We begin, in Section \ref{sec:impossibility}, with an impossibility result (see Theorem~\ref{thm:impossibility}) which shows that tolerating $\varepsilon$ errors, even if $\varepsilon$ errors are spread across the entire graph (rather than within the neighborhood of every node) forces every node to examine a neighborhood strictly larger than $N_{\varepsilon}$ and this holds already in the LOCAL model, which is strictly more powerful than CONGEST. 
The lower bound illustrates the need to learn large neighborhoods to solve problems in the context of error-resilience. In the CONGEST model, learning the entire $\varepsilon$-hop neighborhood may need $\Omega(\Delta^\varepsilon)$ rounds.

Our positive results circumvent this problem.
We present an algorithm that requires nodes to learn the certificates of nodes within their $2\varepsilon+1$-hop neighborhood to verify \ul{} and we separate it into two parts: \textit{learning} and \textit{decision}. 
The key observation is that the decision part never requires the exact structure of the neighborhood: a compact summary suffices. 
Hence, in the learning part, we introduce a novel technique, which we refer to as (local) \emph{graph sketches}, which is used to describe all relevant information about its $2\varepsilon+1$-hop neighborhood in a compact form that a node can learn in $O(\varepsilon^2)$ CONGEST round.
The construction is given as \textsc{sketch-construction}() (Algorithm~\ref{alg:sketch}) in Section \ref{sec:sketches}.

The decision part is our algorithm \framework{} (Section~\ref{subsec:framework}): it takes the compressed neighborhood information, searches for a correction of at most $\varepsilon$ certificates under which the error-free base verifier accepts, and decides in a single round of local computation. 
The correctness of \framework{} hinges on one lemma (Lemma ~\ref{prop:diff_view_dist}). As its proof is long and largely self-contained, we state and use it in the analysis but defer the proof to Section~\ref{sec:erroneous-labeling} for space management.
We conclude in Section~\ref{sec:conclusion} with open problems and future directions.

Although learning logically precedes decision, we present them in the opposite order: the decision algorithm first (Section~\ref{subsec:framework}), then the sketch construction (Section~\ref{sec:sketches}).
This lets us first fix exactly what information the decision part consumes --- the interface the sketch must provide --- and only then show how to compute it, which we find considerably clearer than specifying the compression before its purpose is known.
We believe the graph-sketch technique is of independent interest and opens several directions for verification problems. The broad idea of summarizing a graph compactly to compute on it efficiently has been studied before, in settings different from ours~\cite{ahn2012analyzing,ghaffari2018congested,jurdzinski2018mst}; we expect such compression to find further use in communication-efficient, error-resilient solutions based on certificates, and in the advice model.

For the reader's convenience, we include Table~\ref{table:variables} at the end of the paper, summarizing the main variables and notations used throughout.

\section{Impossibility Result}
\label{sec:impossibility}

We establish the following impossibility result in the LOCAL model. Since LOCAL is strictly more powerful than CONGEST, the lower bound immediately carries over to the CONGEST model.
This impossibility holds even when the adversary may modify the certificates of at most $\varepsilon$ nodes in the entire graph (rather than within the neighborhood of every node, which would allow for more than $\varepsilon$ errors in the entire graph distributed among multiple neighborhoods).

\begin{theorem}\label{thm:impossibility}
In adversarial setting, there is no scheme $(C_{adv},\mathcal{A})$ solving
\textsc{unique-leader} on arbitrary trees with $\varepsilon$ errors in which the verifier
$\mathcal{A}$ has view distance at most $\varepsilon$
even if certificates are allowed to have an unbounded size.
\end{theorem}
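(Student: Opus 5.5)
We argue by contradiction with an indistinguishability (cut-and-paste) argument on paths, which are trees. Suppose a scheme $(C_O,\mathcal{A})$ solves \ul{} with $\varepsilon$ errors and view distance $k\le\varepsilon$. Its verdict at $v$ depends only on the IDs (or ports), inputs and certificates in $N_k(v)$, so no assumption on certificate size is needed. The plan is to build a no-instance $H$ with two leaders and a certificate assignment on $H$ such that every node's radius-$k$ view coincides with a view that some node sees in a \emph{legal} adversarial execution on a yes-instance. Error-resilient completeness forces every such view to be accepting, so all nodes of $H$ accept, which contradicts soundness.

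\textbf{Construction.} Take a long yes-instance path $P$ with a unique leader $\ell$ at one end. If IDs are present, give all nodes of the paths below pairwise distinct IDs. Let $C_O$ be the oracle's certificates on $P$. Let $P'$ be a second copy of $P$ with fresh IDs, leader $\ell'$, and oracle certificates $C'_O$. Form the no-instance $H$ by joining the far ends of the two paths through a middle segment, so that $H$ is a path with leaders at both ends. Far from the junction, give every node of $H$ the certificate it has in $P$ or $P'$; each such node then sees exactly its view in $P$ or $P'$ and accepts. The difficulty is concentrated at the junction. A node $u$ there sees, within radius $k$, certificates coming from both $C_O$ and $C'_O$. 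Its view must therefore be realised as a view in some yes-instance in which at most $\varepsilon$ certificates differ from the oracle's.

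\textbf{Covering the junction.} Consider a single yes-instance $Q$ shaped like $H$: the same path with the same IDs, but with $\ell'$ changed to a non-leader. (The input of a node is fixed by the instance, so the change at $\ell'$ is realised by taking $Q$ to be this modified path; $\ell'$ lies far from the junction, so views near the junction are unaffected.) The oracle assigns some $C_Q$ to $Q$. The adversary may alter up to $\varepsilon$ certificates. For each junction node $u$, we want to choose these alterations so that the certificates inside $N_k(u)$ match those on $H$. A window of radius $k\le\varepsilon$ has $2k+1$ nodes, which can exceed $\varepsilon$, so a direct overwrite does not fit the budget.

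\textbf{Main obstacle and fallback.} The hardest step is making the certificates on $H$ agree with an oracle-plus-$\le\varepsilon$-error assignment on every radius-$k$ window simultaneously. I would first try to choose the middle segment cleverly, using a pigeonhole over the finitely many $(2k+1)$-windows that can occur in oracle certificates, as in Göös–Suomela style arguments. If that fails, I would use a degenerate instance that exploits $k\le\varepsilon$ directly. Take a yes-instance and its leader $\ell$. The adversary can corrupt all of $N_k(\ell)\setminus\{\ell\}$ only if that set has at most $\varepsilon$ nodes, which makes a path with $\ell$ at an endpoint natural: there $|N_k(\ell)\setminus\{\ell\}|=k\le\varepsilon$.

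The concrete target of the fallback is a yes-instance execution in which the leader's view consists of its own certificate plus $k$ arbitrary adversarial certificates. The leader therefore accepts \emph{every} view of the form ``leader endpoint with arbitrary certificates on the next $k$ nodes''. The leader's own certificate may also be corrupted if $k<\varepsilon$; the case $k=\varepsilon$ is handled by keeping it equal to the oracle's value. Symmetrically, a non-leader endpoint accepts arbitrary neighborhoods of size $k$. Now take the two-leader path $H$ of length at least $2k+2$, and give the left half the certificates of the left part of $P$ and the right half those of the reversed $P'$. Each leader endpoint is then covered by the observation above. Every interior node whose $k$-ball touches only one half is covered by the original view, and every node whose $k$-ball straddles the cut is covered by the adversary rewriting the at most $k\le\varepsilon$ nodes on the ``wrong'' side of the cut. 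This last covering argument is where the inequality $k\le\varepsilon$ is used, and it is the step I would check most carefully.
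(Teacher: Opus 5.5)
Your fallback argument is essentially the paper's proof. The paper takes a yes-path $v_0,\dots,v_{4\varepsilon+2}$ with leader $v_0$ and puts $C(u_j)=c_{\min\{j,4\varepsilon+2-j\}}$ on the two-leader path. It then observes that the view of $u_m$ ($m\le 2\varepsilon+1$) differs from that of $v_m$ only at the at most $m-\varepsilon-1\le\varepsilon$ indices beyond the midpoint, which a legal adversary can rewrite. The middle-segment/pigeonhole idea and the attempt to overwrite whole $(2k+1)$-windows of $Q$ are unnecessary. Once $H$'s certificates are copies of oracle certificates, only the at most $k$ positions beyond the cut ever disagree. Your separate observations about leader and non-leader endpoints are special cases of this.

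One step fails as you set it up: the identifier bookkeeping. You give $P'$ fresh IDs. You then cover a node $u$ whose $k$-ball straddles the cut by comparing it with the corresponding node of $P$, after the adversary rewrites the at most $k$ nodes beyond the cut. But in $H$ those nodes carry $P'$'s IDs and port numbers, while the same positions in $P$ carry $P$'s IDs. The adversary can change only certificates, so the two views never coincide and completeness says nothing about $u$.

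The fix is to make all instances share $H$'s identifiers and ports:
\begin{itemize}
\item Let $P$ be $H$ with the right leader demoted to $\mathcal{N}$ (your $Q$), and let $P'$ be $H$ with the left leader demoted.
\item Take oracle certificates $C_O$ on $P$ and $C'_O$ on $P'$, and give $H$ the certificates $C_O$ on its left half and $C'_O$ on its right half.
\end{itemize}
If each half has at least $k+1$ nodes, no view reaches the opposite leader. Each straddling view then differs from a legal yes-view in at most $k\le\varepsilon$ certificates, and the contradiction with soundness follows. The paper's single mirrored copy is the ID-free special case of this: its notion of a yes-view compares only tree structure, inputs and certificates.
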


\begin{proof}
Assume for the sake of contradiction that such a scheme
$(C_{adv},\mathcal{A})$ exists, where $\mathcal{A}$ has view distance $\varepsilon$.
By error-resilient completeness, for every yes-instance, there is an
oracular certificate assignment $C_O$ under which every node accepts,
and continues to accept under every assignment obtained by modifying at
most $\varepsilon$ certificates within the $\varepsilon$-hop neighborhood of each node.

\medskip
\noindent\emph{The yes-instance.}
Let $G_Y=(v_0,v_1,\dots,v_{4\varepsilon+2})$ be a path on $4\varepsilon+3$ nodes, where
$v_0$ is the unique leader (input bit $\mathcal{L}$). 
By completeness, fix an oracular certificate assignment $C_O$ under
which every node of $G_Y$ accepts and keeps accepting under any
admissible adversarial modification.
For $0\le j\le 4\varepsilon+2$, let $c_j=C_O(v_j)$.
Their actual values are irrelevant to this proof, and we use them only to construct the certificate assignment for the second tree. 

We highlight a property we need.
We say that a node $v$ in a tree $G$ has a \emph{yes-view} if there exists a tree $G'$, a node $v' \in G'$ and an oracular certification $C_O'$ on $G'$ such that views of $v$ and $v'$ are identical in tree structure and input bits, and the certification within the view of $v$ differs in at most $\varepsilon$ nodes from $C_O'$ on nodes within the view of $v'$. 
By error-resilient completeness, node $v$ must accept if it has a \emph{yes-view}.


\medskip
\noindent\emph{The no-instance.}
Let $G_N=(u_0,u_1,\dots,u_{4\varepsilon+2})$ be a path on the same number of
nodes, but with \emph{two} leaders: $u_0$ and $u_{4\varepsilon+2}$ have input bits
$\mathcal{L}$, and every other node has input bit $\mathcal{N}$. 
We now define a certificate assignment $C$ on $G_N$ by copying $C_O$ inward from each end, reflected about the midpoint (central node $u_{2\varepsilon+1}$) of the path:
$C(u_j)=c_{\min\{j,\,4\varepsilon+2-j\}}, $ where $ 0\le j\le4\varepsilon+2$

Fix a node $u_m$. 
If $m\le 2\varepsilon+1$, then we compare $u_m \in G_N$ and $v_m \in G_Y$; otherwise we compare $u_{4\varepsilon+2-m} \in G_N$ and $v_m \in G_Y$. These cases are analogous, so we present only the case for $m\le 2\varepsilon+1$.
We show that $u_m$ has a yes-view by comparing it to the view of $v_m$ under the oracular certification $C_O$.
The view of $u_m$ consists of the nodes $u_{\max\{0, m-\varepsilon\}},\dots, u_{\min \{4\varepsilon+2, m+\varepsilon\}}$, and likewise the view of $v_m$ consists of $v_{\max\{0, m-\varepsilon\}},\dots, v_{\min \{4\varepsilon+2, m+\varepsilon\}}$.
Since $m\le2\varepsilon+1$, the largest index appearing in either view is $m+\varepsilon \le 3\varepsilon +1 < 4\varepsilon+2$. Hence, neither view reaches the right endpoint of the path.
In particular, the right leader $u_{4\varepsilon+2}$ of $G_N$ never lies in $u_m$'s view. 
The two views thus span the same range of indices and have the same path structure.
Moreover, if $m \geq \varepsilon+1$, neither view contains index $0$, so neither contains a leader. 
If $m\leq \varepsilon$, both views contain index $0$ and no other leader index, and $u_0 \in G_N$ and $v_0 \in G_Y$ are both leaders.  
Either way, the leaders align on both views. 
Hence, the two views agree in structure and in leader placement, and can differ only in certificates. 
By construction $C(u_j) = c_j = C_O(v_j)$ for all $j \leq 2\varepsilon+1$, so a difference can occur only at an index $j > 2\varepsilon+1$. 
But $u_m$'s view reaches only up to index $m+\varepsilon$, and since $m \leq 2\varepsilon+1$, it contains at most $(m+\varepsilon)-(2\varepsilon+2)+1 \;=\; m-\varepsilon-1 \;\le\; \varepsilon$ indices above $2\varepsilon+1$. 
Hence, the two views differ in at most $\varepsilon$ certificates, so $u_m$ has a yes-view and
$\mathcal{A}$ makes $u_m$ accept.

Consequently $\mathcal{A}$ accepts at every node of $G_N$, even though
$G_N$ has two leaders and is a no-instance of \textsc{unique-leader}.
This contradicts soundness and completes the proof.
\end{proof}

\section{\textmd{\ul} in Adversarial Setting (with Errors)}

    \label{subsec:framework}


The impossibility of Section~\ref{sec:impossibility} shows that, in the adversarial setting, every node must inspect a neighborhood of radius at least $\varepsilon+1$ to decide correctly. 
The algorithm we develop in this section requires a view distance of $2\varepsilon+1$.
One of the main challenges here is due to CONGEST -- how can a node learn about its $2\varepsilon+1$-hop neighborhood within $poly(\varepsilon)$ rounds?

We split the task into two logical parts. The first is \emph{learning} the neighborhood: rather than reconstructing the exact $(2\varepsilon+1)$-hop neighborhood, each node learns a compact summary of it, which we call a \emph{graph sketch}. 
This construction succeeds only when the certificates a node sees are within $\varepsilon$ corrections of a certification that the base algorithm $\mathcal{A}_0$ would accept; if no such correction exists, the construction cannot produce a valid sketch, and we say the node encounters a \emph{sketch rejection}. A sketch rejection can arise only when the node ought to reject in the first place --- either the instance has no unique leader, or the certificates are too corrupted to be fixed --- so a node that encounters one simply rejects. It therefore suffices, for the rest of this section, to reason about nodes that obtain a valid sketch.
We defer the construction to Section~\ref{sec:sketches}.
In the present section, we assume each node has already obtained its graph sketch; we state precisely what information the sketch provides and use it to decide locally.
The second part is thus \emph{deciding}: given the sketch, each node runs \framework{}, which we present and prove correct in Sections~\ref{subsec:analysis}--\ref{sec:erroneous-labeling}. Only afterward, in Section~\ref{sec:sketches}, do we show how the data must be transmitted to compute the sketch.

\subsection{General Idea of the Algorithm \framework{}}
\label{subsec:general-idea-algorithm}

We begin with a \emph{graph sketch} $S_G^v$, from which a node $v$ obtains the following information. Although the sketch (of a particular size) may encode additional information, we list only the information required by our algorithm.

\begin{enumerate}[leftmargin=*]
\setlength{\itemsep}{-1pt}
    \item The number of nodes in $N_{2\varepsilon+1}(v)$ in $G$.    
    \item For each $u \in N_1(v)$, the adversarial label $C_{adv}(u)$ together with the port at $v$ leading to node $u$.
    

    \item[\namedlabel{itm:one}{3}.] For every sequence of certificates $(c_1, c_2, \dots, c_k)$ with $k \le 4\varepsilon+3$, every pair of positions $m,j \in \{1, \dots, k\}$, and every pair $u_1, u_2$ of neighbors of $v$, the node $v$ knows whether the sequence is \emph{realizable at positions $m,j$ through $u_1, u_2$}: that is, whether there exists a path $P = (v_1, v_2, \dots, v_k)$ contained entirely within $N_{2\varepsilon+1}(v)$ such that (i) $v$ occupies position $m$ in $P$, i.e.\ $v = v_m$, (ii) a node $x$ with input $\mathcal{L}$ occupies position $j$ in $P$, i.e.\ $x = v_j$, (iii) $u_1 = v_{m-1}$ whenever $m \ge 2$, and $u_2 = v_{m+1}$ whenever $m \le k-1$ (the path enters and leaves $v$ through the specified neighbors) and (iv) the adversarial certificates along $P$ match the sequence, i.e.\ $C_{adv}(v_i) = c_i$ for every $i$.
\end{enumerate}

The procedure by which a node constructs (or obtains) the graph sketch of its $N_{2\varepsilon+1}$ is described in Section~\ref{sec:sketches} -- for now, we assume a node has it and describe how it decides. 
A node might not learn the exact structure of the neighborhood from the sketch, so it instead considers every neighborhood that is consistent with what the sketch records. 
We formalize this next with the following definitions: the neighborhood a node imagines, the certificates it carries, and the corrections the node finds.

\begin{definition}[\textbf{Imagined Tree}]
    \label{def:imagined-graph}
    Let $S_G^v$ denote the graph sketch that node $v$ obtains in $G$ -- which captures only part of $N_{2\varepsilon+1}(v)$. 
    A tree $G^v$ for the node $v$ is an \emph{imagined tree} consistent with $S_G^v$ if, under the same graph sketching procedure, the sketch that $v$ obtains in $G^v$ equals $S_G^v$.
    
\end{definition}

Notice that $2$ neighboring nodes $u,v$ may imagine different imagined trees $G_u,G_v$. For example, in Figure~\ref{fig:brothers} node $v$ may imagine the tree on the right, while node $u$ may imagine the tree on the left (which in this case is the original tree).

The (actual) neighborhood $N_{2\varepsilon+1}(v)$ in $G$ is one imagined tree, and in general, there may be several, since $S_G^v$ need not determine the neighborhood uniquely. 
However, $G^v$ and $N_{2\varepsilon+1}(v)$ in $G$ have the same number of nodes, as the sketch encodes the number of nodes in $N_{2\varepsilon+1}(v)$. 
Each imagined tree carries certificates that the node actually reads and uses to make further decisions. 

\begin{definition}[\textbf{Sketch Certificates}]
    Consider an imagined tree $G^v$ consistent with $S_G^v$. The certificate assignment that $G^v$ carries is called the sketch certificate assignment and denoted $C_{sk}^v$. The value $C_{sk}^v(u)$ is the sketch certificate of node $u$ in $G^v$. 
\end{definition}

We now define the notion of \emph{imagined certificates} from the perspective of the node $v$, which captures the notion that each $v$ looks for corrections to the sketch certificates on $G^v$.

\begin{definition}[\textbf{Imagined Certificates}]
    \label{def:imagine}
        For node $v$, a function $C_{im}^v$ on the nodes of $G^v$ is called an imagined certificate assignment function, if $C_{im}^v$ is the certificate assignment function that causes $v$ to accept, i.e., $C_{im}^v$ differs from $C_{adv}$ in at most $\varepsilon$ nodes within $N_{2\varepsilon+1}(v)$ and the base algorithm $\mathcal{A}_0$ makes all nodes within $N_{2\varepsilon}(v)$ accept using $C_{im}^v$.
        To refer to this, we sometimes say that $v$ imagines $C_{im}^v$ (this is local from a node's perspective and may not extend to a global one).
\end{definition}

In words, $v$ considers every imagined tree $G^v$ consistent with its sketch, and in each one searches for an imagined certification, i.e., an override of the certificates of at most $\varepsilon$ nodes within its $2\varepsilon+1$-hop neighborhood under which $\mathcal{A}_0$ makes every node accept throughout $N_{2\varepsilon}(v)$.
If some imagined tree admits such a certification, $v$ accepts; otherwise, it rejects. 
If the adversary introduced at most $\varepsilon$ errors, then the actual $N_{2\varepsilon+1}(v)$ in $G$ is one of these imagined trees, and correcting those errors is a valid certification on it, so $v$ accepts. 

The check is limited to $N_{2\varepsilon}(v)$ (rather than $N_{2\varepsilon+1}(v)$) because modifying the certificate of some node $u$ in $v$'s local computation may affect the decisions of $u$ and its neighbors under $\mathcal{A}_0$, and nodes on the boundary $N_{2\varepsilon+1}(v) \setminus N_{2\varepsilon}(v)$ may have neighbors whose labels are unknown to $v$ (and hence it cannot be sure about their decision).

\begin{algorithm2e}
    \SetKwBlock{Halt}{}{}
\newcommand{\haltaccept}{\textbf{halt and accept}}
\newcommand{\haltreject}{\textbf{halt and reject}}    

    \ForEach{imagined tree $G^v$ consistent with $S_G^v$}
    {
    \ForEach{certificate assignment $C_{im}^v$ on $G^v$ differing from $C_{sk}^v$ in at most $\varepsilon$ nodes within $N_{2\varepsilon+1}(v)$}{
        \If{all nodes within $N_{2\varepsilon}(v)$ accept according to $\mathcal{A}_0$ under $C_{im}^v$}{
            \textbf{halt and accept}
        }
    }

    }

    \textbf{halt and reject}

	\caption{\framework{}($\mathcal{A}_0$,$S_G^v$, $\varepsilon$) for node $v$}
 \label{alg:framework}
\end{algorithm2e}

\subsection{Correctness of Algorithm \framework{}}
\label{subsec:analysis}

Here we analyze the correctness of \framework{}. We state the following theorem.

\begin{theorem}
\label{thm:correctness-framework}
\framework{} together with $C_{adv}$ verifies \ul{} in $O(\varepsilon^2)$ CONGEST rounds, even when $C_{adv}$ differs from $C_O$ in at most $\varepsilon$ nodes within $N_{2\varepsilon+1}(v)$ for every node $v$.


\end{theorem}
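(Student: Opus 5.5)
The proof splits into three parts: round complexity, error-resilient completeness, and soundness. The round bound is inherited entirely from the sketch construction. \textsc{sketch-construction}() (Section~\ref{sec:sketches}) gives every node its sketch $S_G^v$ in $O(\varepsilon^2)$ CONGEST rounds, or else ends in a sketch rejection. \framework{} itself needs no further communication, since it is pure local computation over the finitely many imagined trees and corrections. So the first step is just to invoke that construction. We also need that a node which hits a sketch rejection is allowed to reject; this holds because, as argued above, a sketch rejection arises only when no $\varepsilon$-correction accepted by $\mathcal{A}_0$ exists.

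\emph{Completeness.} Suppose $G$ has a unique leader, and let $C_O$ be the BFS-distance certification. Fix $v$, and suppose $C_{adv}$ differs from $C_O$ on at most $\varepsilon$ nodes of $N_{2\varepsilon+1}(v)$. The actual induced neighborhood $G[N_{2\varepsilon+1}(v)]$, carrying $C_{adv}$, produces exactly $S_G^v$ under the sketching procedure. It is therefore an imagined tree consistent with $S_G^v$, with $C_{sk}^v=C_{adv}$.

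On this imagined tree, take $C_{im}^v=C_O$ restricted to $N_{2\varepsilon+1}(v)$. It differs from $C_{sk}^v$ on at most $\varepsilon$ nodes. Every node of $N_{2\varepsilon}(v)$ has its whole closed neighborhood inside $N_{2\varepsilon+1}(v)$, so $\mathcal{A}_0$ evaluates there exactly as in $G$ under $C_O$, and it accepts. Hence $v$ halts and accepts. This also certifies that no sketch rejection occurs, since a valid correction exists.

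\emph{Soundness.} This is the main step, and it is where I would invoke Lemma~\ref{prop:diff_view_dist}. Assume for contradiction that $G$ has zero or at least two leaders, yet every node accepts. Each accepting $v$ then imagines some tree $G^v$ and some certification $C_{im}^v$ that is locally consistent with BFS distances to a unique leader on $N_{2\varepsilon}(v)$.

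For the zero-leader case I would use item~3 of the sketch. An accepting $C_{im}^v$ forces the node with imagined certificate $0$ to carry input $\mathcal{L}$. Near the node of minimal certificate, realizability of the relevant descending path through a leader position would be certified by the sketch, and that is impossible when no $\mathcal{L}$ exists.

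For two leaders $\ell_1,\ell_2$, I would take the path $P$ between them and examine nodes along it. Each such $v$ orients toward a leader via its imagined parent, so somewhere on $P$ two adjacent nodes $u,v$ must orient toward different leaders, or point at each other inconsistently. Lemma~\ref{prop:diff_view_dist} should say that neighboring nodes' imagined certifications cannot disagree in this way: their overlapping views, intersected with the real tree via the realizability information of item~3 and controlled by the $\varepsilon$-error budget in $N_{2\varepsilon+1}$, force consistent distance-to-leader values. Applying the lemma at the switch point gives the contradiction.

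I expect the hardest part to be making this step rigorous when $u$ and $v$ imagine \emph{different} trees. In that case one must argue only through sequences that item~3 certifies as actually present in $G$, and the radius $2\varepsilon+1$ together with paths of length up to $4\varepsilon+3$ is exactly what lets a disagreement be witnessed by more than $\varepsilon$ conflicting certificates. I would treat that as the content of the deferred lemma and cite it.
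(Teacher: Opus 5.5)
Your round-complexity and completeness parts coincide with the paper's. The gap is in soundness, concretely in the zero-leader case. You assume that an accepting $C_{im}^v$ contains a node of imagined certificate $0$, which item~3 would then tie to a real $\mathcal{L}$ in $G$. Nothing forces this. $\mathcal{A}_0$ is checked only on $N_{2\varepsilon}(v)$, so the minimum of $C_{im}^v$ can sit at distance exactly $2\varepsilon+1$ from $v$ with a positive value, because nodes there are never checked. This is exactly Case~2 in the proof of Lemma~\ref{prop:diff_view_dist}. For example, take a leaderless path whose certificates increase by one from some large value. Every node at distance at least $2\varepsilon+1$ from the low end passes the test of \framework{} with zero corrections, and its imagined root lies on its horizon. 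So no individual sketch exhibits a contradiction. In a leaderless tree the contradiction is inherently global: a chain of imagined parents with strictly decreasing certificates must end somewhere. A local appeal to item~3 cannot find it. The same problem undermines ``each $v$ orients toward a leader via its imagined parent'' in your two-leader case. Also, your paraphrase of Lemma~\ref{prop:diff_view_dist} as forcing consistent distance-to-leader values is not what it states; imagined certificates need not be distances to any leader.

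The missing idea is to use the lemma exactly as stated, namely $C_{im}^u(u)=C_{im}^v(u)$ and $C_{im}^u(v)=C_{im}^v(v)$ for adjacent $u,v$, to stitch the local certifications into one global certification.
\begin{itemize}
\item Set $C(v):=C_{im}^v(v)$ for every node $v$.
\item For every neighbor $u$ of $v$, $C(u)=C_{im}^u(u)=C_{im}^v(u)$, so $C$ agrees with $C_{im}^v$ on $N_1(v)$.
\item $\mathcal{A}_0$ reads only $N_1(v)$ and accepts $v$ under $C_{im}^v$, so it accepts $v$ under $C$.
\item Hence $\mathcal{A}_0$ accepts every node of $G$ under a single certification, contradicting soundness of the base scheme.
\end{itemize}
This handles zero and several leaders at once, with no case split and no direct use of item~3, which is consumed inside the lemma's proof. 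Your switch-point idea for two leaders can be repaired with the precise lemma. The lemma makes both endpoints of each edge agree on which one is the parent, so some node on the $\ell_1$--$\ell_2$ path receives two parents. However, that repair is subsumed by the stitching argument, and the zero-leader case genuinely needs a global argument of this kind.
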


Completeness is straightforward, and we prove it in Lemma~\ref{lemma:graph-with-property-prediction-model} below. 
The major one is soundness, and its difficulty is that adjacent nodes $u$ and $v$ might reconstruct different imagined trees $G^u$ and $G^v$ and choose different imagined certifications $C_{im}^u$ and $C_{im}^v$ on them. 
We begin with completeness, which is immediate, and then turn to soundness. 

\begin{lemma}[Completeness]
    \label{lemma:graph-with-property-prediction-model}
    If $G$ has a unique leader, \framework{} makes every $v$ accept, for every adversarial certificate assignment $C_{adv}$ that differs from $C_O$ in at most $\varepsilon$ nodes within $N_{2\varepsilon+1}(v)$.
\end{lemma}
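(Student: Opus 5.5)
The plan is to exhibit, for every node $v$, a concrete pair (imagined tree, certification) that the loops of Algorithm~\ref{alg:framework} will examine and accept. Fix $v$ and suppose $G$ has a unique leader. Let $C_O$ be the oracular assignment, i.e.\ BFS distances to the leader. Let $C_{adv}$ differ from $C_O$ in at most $\varepsilon$ nodes of $N_{2\varepsilon+1}(v)$.

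First, the induced subtree $G[N_{2\varepsilon+1}(v)]$ (or any tree realizing $G$'s behaviour under the sketching procedure) is itself consistent with $S_G^v$. The reason is that running the same sketching procedure on $G$ produces, by definition, exactly $S_G^v$. So by Definition~\ref{def:imagined-graph}, $G$'s own neighborhood is an imagined tree $G^v$. On it, the sketch certificates coincide with the certificates $v$ actually reads, namely $C_{sk}^v = C_{adv}$ restricted to $N_{2\varepsilon+1}(v)$. We must also argue that the sketch construction does not produce a sketch rejection in this case. By the description in this section, this happens only when no correction of at most $\varepsilon$ certificates makes $\mathcal{A}_0$ accept. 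The next step exhibits such a correction, so no sketch rejection occurs. At this point I will invoke the property of \textsc{sketch-construction} stated above, namely that it succeeds whenever such a correction exists.

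Second, in this imagined tree we take $C_{im}^v := C_O$ restricted to $N_{2\varepsilon+1}(v)$. By the error bound it differs from $C_{sk}^v$ in at most $\varepsilon$ nodes of $N_{2\varepsilon+1}(v)$, so the inner loop examines it. It remains to check that every $u \in N_{2\varepsilon}(v)$ accepts under $\mathcal{A}_0$ with $C_O$. The key point is that $N_1(u) \subseteq N_{2\varepsilon+1}(v)$, so every certificate $\mathcal{A}_0$ reads at $u$ is defined in $G^v$. Moreover, $u$ has the same neighbors in $G^v$ as in $G$, because $G^v$ is the induced neighborhood of a tree. Hence $u$'s local check is literally the same as in $G$ under $C_O$, where it passes by correctness of the warm-up scheme: the unique leader has certificate $0$ with all neighbors at $1$, and each other node has exactly one neighbor (its parent) at distance one less and all other neighbors at one more. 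Thus the condition in the algorithm holds and $v$ halts and accepts.

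The main subtlety is the first step: we must make sure that the true neighborhood genuinely counts as an imagined tree and that the sketch is computed without rejection under at most $\varepsilon$ errors. This depends on the sketching procedure of Section~\ref{sec:sketches} being faithful on the real graph. I would state this dependence explicitly, as the paper's informal remark already does, and cite the sketch-construction guarantee. The rest is a direct locality check.
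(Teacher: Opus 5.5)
Your proposal is correct and is essentially the paper's proof: the true neighborhood of $v$ is itself an imagined tree (with $C_{sk}^v = C_{adv}$), and $C_O$ restricted to $N_{2\varepsilon+1}(v)$ differs from it in at most $\varepsilon$ nodes and makes every node of $N_{2\varepsilon}(v)$ accept under $\mathcal{A}_0$. Your extra points make explicit what the paper's one-line argument leaves implicit: that $N_1(u)\subseteq N_{2\varepsilon+1}(v)$ for every $u\in N_{2\varepsilon}(v)$, and that sketch rejection cannot occur (strictly, this must hold at every node and in every phase, with $C_O$ restricted to the current partial tree serving as the correction).
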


\begin{proof}
    This is quite straightforward.
    For an arbitrary node $v$, consider the imagined tree to be the exact neighborhood $N_{2\varepsilon+1}(v)$ of $G$ itself, which is consistent with $v$'s sketch. 
    On it, we take the certificate assignment function $C_O$ restricted to $N_{2\varepsilon+1}(v)$.
    Under $C_O$, every node accepts according to the algorithm $\mathcal{A}_0$, so in particular, all nodes within $N_{2\varepsilon}(v)$ accept. This proves the lemma. 
\end{proof}

This settles completeness. For soundness, suppose $G$ does not have a unique leader, yet, for contradiction, \framework{} makes every node accept. 
Then each $v$ imagines a certification $C_{im}^v$ on some imagined tree $G^v$. 
The following lemma is key: adjacent nodes agree on each other's imagined certificates. Its proof is the most involved part of the analysis, so we state it here and use it to prove soundness.
Due to space management, we defer its proof to Section \ref{sec:erroneous-labeling}.

\begin{lemma}
\label{prop:diff_view_dist}
    Let $u$ and $v$ be adjacent nodes in $G$ that imagines trees $G^u$ and $G^v$ and certifications $C_{im}^u$ and $C_{im}^v$ respectively (recall Definition \ref{def:imagined-graph}-\ref{def:imagine}).
    Then $C_{im}^u(u)=C_{im}^v(u)$ and $C_{im}^u(v)=C_{im}^v(v)$.
\end{lemma}

Assuming Lemma~\ref{prop:diff_view_dist}, soundness follows by stitching the local certifications into a single global one.
We emphasize that soundness does not rely on the error bound: even if $C_{adv} = C_O$, some node must reject whenever $G$ lacks a unique leader.

\begin{lemma}[Soundness]
    \label{lemma:graph-without-property-prediction-model}
    When $G$ does not have a unique leader, for any proof $C$,    \framework{} makes at least one node reject.
\end{lemma}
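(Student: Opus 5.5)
We argue by contradiction. Suppose $G$ has no unique leader, yet every node $v$ accepts under \framework{}. Then each $v$ fixes an imagined tree $G^v$ and an imagined certification $C_{im}^v$ for which $\mathcal{A}_0$ accepts at all nodes of $N_{2\varepsilon}(v)$ in $G^v$. The goal is to stitch these local objects into one certification $C^*$ of $G$ itself under which $\mathcal{A}_0$ accepts at every node of $G$. This contradicts the soundness of the oracular scheme $(C_O,\mathcal{A}_0)$, which we take as known from the warm-up. We rely on nothing about the error bound, in line with the remark preceding the lemma.

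\textbf{Definition of $C^*$.} Set $C^*(v) := C_{im}^v(v)$ for every $v\in V$. By Lemma~\ref{prop:diff_view_dist}, for every edge $\{u,v\}$ of $G$ we have $C_{im}^v(u)=C_{im}^u(u)=C^*(u)$ and $C_{im}^v(v)=C^*(v)$. Hence node $v$'s imagined certification agrees with $C^*$ on $v$ and on each of its actual neighbors.

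\textbf{Checking $\mathcal{A}_0$ at each node.} We verify that $\mathcal{A}_0$ accepts at every $v$ under $C^*$ in $G$. The decision of $\mathcal{A}_0$ at $v$ depends only on three things: $v$'s input bit, $v$'s certificate, and the multiset of certificates of $v$'s neighbors. We need these to coincide in $G$ under $C^*$ and in $G^v$ under $C_{im}^v$; then acceptance of $v$ in $G^v$ transfers to $G$.
\begin{itemize}
\item The certificate of $v$ agrees by the definition of $C^*$.
\item The input bit of $v$ agrees because the imagined tree reproduces the sketch, and the sketch records leader positions through item~\ref{itm:one} with $m=j$.
\item For the neighbors, item~2 of the sketch fixes $v$'s ports and neighbor set, so the neighbors of $v$ in $G^v$ correspond one-to-one with its neighbors in $G$. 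On them, $C_{im}^v$ agrees with $C^*$ by the previous step.
\end{itemize}
Since $v\in N_{2\varepsilon}(v)$, $\mathcal{A}_0$ accepts at $v$ in $G^v$, and therefore also at $v$ in $G$ under $C^*$.

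\textbf{Conclusion and main obstacle.} Every node of $G$ now accepts under $\mathcal{A}_0$ with the certification $C^*$. By soundness of the base scheme, $G$ has exactly one leader, a contradiction. The step I expect to need the most care is the identification of $v$'s neighborhood in $G^v$ with its true neighborhood in $G$. One must argue that the sketch pins down $v$'s own input bit and its degree, with neighbors identified via ports. Once that is justified, Lemma~\ref{prop:diff_view_dist} supplies all the consistency that is needed, and the rest of the argument is routine.
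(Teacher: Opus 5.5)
Your proof is correct and follows the paper's argument: you define $C^*(v)=C_{im}^v(v)$, use Lemma~\ref{prop:diff_view_dist} to get agreement on $N_1(v)$, transfer acceptance of $\mathcal{A}_0$ at $v$ from $G^v$ to $G$, and contradict the soundness of the base scheme. The paper leaves implicit how $v$'s input bit and its neighbors in $G^v$ are identified with those in $G$; you make this explicit, via item 3 with $m=j$ for the input bit and the ports of item 2 for the neighbors.
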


\begin{proof}
    Suppose, for contradiction, that \framework{} makes every node accept. 
    Then each node $v$ imagines a certification $C_{im}^v$ on some imagined tree $G^v$.
    Define a certificate assignment $C$ on $G$ by $C(v) = C_{im}^v(v)$ for each node $v$, i.e., every node receives the certificate it imagines for itself. We show that $\mathcal{A}_0$ makes every node of $G$ accept under $C$. 
    Fix a node $v$. 
    Since \framework{} makes it accept using $C_{im}^v$, the base algorithm $\mathcal{A}_0$ must make every node within $N_{2\varepsilon}(v)$ of $G^v$ accept under $C_{im}^v$.
    By Lemma~\ref{prop:diff_view_dist}, for every neighbor $u$ of $v$ we have $C_{im}^v(u) = C_{im}^u(u) = C(u)$. Thus $C$ agrees with $C_{im}^v$ on all of $N_1(v)$. 
    As $\mathcal{A}_0$ inspects only $N_1(v)$ to decide for $v$, and it makes $v$ accept under $C_{im}^v$, it also makes $v$ accept under $C$.
    Hence, $\mathcal{A}_0$ accepts every node of $G$ under $C$, but $\mathcal{A}_0$ is a correct error-free verifier for \ul{}, so it must reject some node when $G$ does not have exactly one leader -- a contradiction. 
    \end{proof}


Lemma~\ref{lemma:graph-with-property-prediction-model} (completeness) and~\ref{lemma:graph-without-property-prediction-model} (soundness) together establish correctness of Theorem~\ref{thm:correctness-framework}. The CONGEST round complexity is shown in Section~\ref{sec:sketches}.


\section{Graph Sketches in $O(\varepsilon^2)$ CONGEST Rounds}
\label{sec:sketches}


We now present the algorithm that constructs the graph sketches described in Section~\ref{subsec:general-idea-algorithm}. Recall that it may return a sketch rejection instead of a proper sketch if it is clear that the instance (graph and adversarial certification) should be rejected. Otherwise, every node obtains a proper sketch.
For convenience, we use the phrase \emph{acceptable certification} to denote a certificate assignment function under which $\mathcal{A}_0$ makes every node accept.

\paragraph{Overview:} Notice that an acceptable certification on a $(2\varepsilon+1)$-hop neighborhood can be easily encoded if one knows the graph structure -- it is sufficient to point out which node $r$ has the smallest certificate, and the certificates of every other node is determined by the distance to $r$, see Figure~\ref{fig:root}.

\begin{figure}
    \centering
    \includegraphics[scale=0.5]{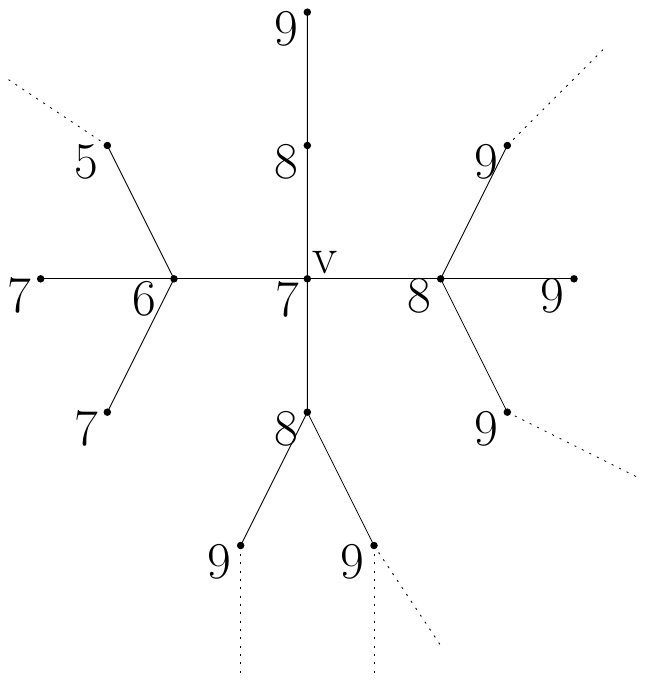}
    \caption{An example graph with constructed acceptable certificates $C'$, from the perspective on node $v$ (at the center). The dotted lines represent edges that $v$ does not know about. In this case, the root $r_v$ will be the node with certificate $C'(r_v)=5$ and every node $w$ must have certificate $C'(w)=C'(r_v)+dist(r_v,w)$ to match the root.}
    \label{fig:root}
\end{figure}

However, we transmit adversarial certificates instead of an acceptable one. If there exists an acceptable certification $C$ of nodes that differs from the adversarial $C_{adv}$ in at most $\varepsilon$ nodes in $(2\varepsilon+1)$-hop neighborhood, then we can transmit $C$ in the compact form above together with the list of deviations from $C_{adv}$, which allow the adversarial certificates to be reconstructed. Otherwise, we can return the sketch rejection, since in that case, the adversary must have violated the constraint of $\varepsilon$ modifications in $(2\varepsilon+1)$-hop neighborhood for every node.
The main challenge is that learning the exact $(2\varepsilon+1)$-hop neighborhood is too expensive in the CONGEST model.
Fortunately, a node do not need the exact structure of the graph to be encoded in the sketch in Section~\ref{subsec:framework}. 

Instead, each node $v$ only needs to know, for each certificate sequence $S$, whether a path through $v$ whose adversarial certificates match $S$ exists.\footnote{For the clarity of presentation, in this paragraph we omit that the position of node $v$, its two neighbors, and possibly the position of the leader are also recorded with each sequence of labels.} Consequently, all the paths with the same certificates only need to be represented by a single sequence.
%
%
In fact, rather than transmitting the list of sequences of certificates, we build a tree that corresponds to a subgraph of $G$ and contains all the unique sequences of certificates and no other sequence.






The algorithm works in phases $j=0,\dots,2\varepsilon$. At the start of a $j$-th phase, each node $v$ knows the tree containing information about its $j$-hop neighborhood. Each node $v$ transmits this tree to each neighbor. Similarly, $v$ receives the trees from each of its neighbor $u$ -- each tree contain information about $j$ hop neighborhood of $u$. Node $v$ then processes these trees to construct a single tree that represents its $j+1$-hop neighborhood that can be used in the next phase.
There are $2\varepsilon+1$ phases, each lasting only $O(\varepsilon)$ CONGEST rounds, for a total of $O(\varepsilon^2)$ CONGEST rounds. 
Next we will present more precise description of the algorithm.

\subsection{Learning Local Neighborhood Subroutine} 
\label{subsec:sketch-construction}
We present the description of a $j$-th phase from the perspective of $v$, ref. to Algorithm \ref{alg:sketch}.

\paragraph{Preliminaries.} As mentioned in the overview, node $v$ knows a subtree $T^v$ of $G$ and its adversarial certification such that for every sequence of labels $S$ that occurs on some path $P$ in $G$ going through $v$ and up to $2$ neighbors of $v$, possibly with a position $i$ within the sequence marked for a node with input $\mathcal{L}$, there exists a path $P'$ in $T^v$ with the same sequence of labels going through $v$ and through the same neighbor(s) of $v$ with a node marked $\mathcal{L}$ (if applies) at the specified position $i$ and vice versa -- for every sequence in $T^v$ there exists a corresponding sequence in $G$.
Furthermore, for each neighbor $u$ of $v$, the node $v$ knows the number $n_u^v$ of nodes at distance $j$ from $v$ whose path to $v$ passes through $u$.

In the base case of $j=0$, each node $v$ knows the tree $T^v$ representing its $0$-hop neighborhood, namely $v$ knows itself and its adversarial label $C_{adv}(v)$. Additionally, $v$ knows $n_u^v = 0$ when $j=0$.
Based on $T^v$, for each neighbor $u$ of $v$, node $v$ prepares new trees $T_{v \rightarrow u}$ that will be transmitted to $u$.\footnote{The trees sent to different neighbors will be almost identical. Thus, this part can be adapted to using broadcast messages, with very little overhead caused by the small differences.} The tree $T_{v \rightarrow u}$ is guaranteed to be small enough that $O(\varepsilon)$ CONGEST rounds will suffice to transmit its structure and encode all the certificates on it.
%
%
An edge $e$ in tree $T_{v \rightarrow u}$ may represent an entire path $P$ in the original tree $T^v$. The weight $w(e)$ of edge $e$ is the number of edges in $P$. Nodes of $T_{v \rightarrow u}$ that are not interior to such a compressed path are called \emph{explicit}.

\paragraph{Construction.} Based on the adversarial certificate assignment of tree $T^v$, node $v$ searches for an imagined certification $C'$ of $T^v$. If no such imagined certification exists, then $v$ rejects; this is a sketch rejection. Otherwise, $v$ finds the node $r$ with the smallest certificate in $T^v$.

The tree $T_{v \rightarrow u}$ contains the node $v$, the neighbor $u$ of $v$ with the edge $\{u,v\}$ (with weight $1$), node $r$ and all the nodes and edges (with weight $1$ each) between $v$ and $r$. Furthermore, we add and connect to tree $T_{v \rightarrow u}$ each node $w \in T^v$ such that $C'(w) \neq C_{adv}(w)$.\footnote{Connecting this node may add an edge with weight greater than $1$.}

Connecting a node $u$ to the tree $T_{v \rightarrow u}$ is done by finding the shortest path $P$ in $T^v$ between $u$ and the nearest node $w$ represented (explicit or not) in $T_{v \rightarrow u}$. If $w$ is explicit in $T_{v \rightarrow u}$, then we simply add an edge $e=\{u,w\}$ with weight $w(e)$ equal to the length of $P$. If $w$ is not explicit, then $w$ lies on a path connecting two explicit nodes in $T_{v \rightarrow u}$, which is represented by a single edge $e$ with $w(e)>1$. In this case we add the node $w$ to the tree $T_{v \rightarrow u}$, splitting the edge $e$, and we connect node $u$ to node $w$.

Finally, for each explicit node $w$, we also store the length $L_w$ of the longest path from node $w$ within $T^v$ that does not have any edges represented in tree $T_{v \rightarrow u}$. This length $L_w$ represents a path of nodes $x$ in $T^v$ with adversarial certificates $C_{adv}(x)=C_{adv}(w)+dist(w,x)$.
This ends the construction of the tree $T_{v \rightarrow u}$.
In addition to the tree $T_{v \rightarrow u}$, node $v$ also sends to each neighbor $u$ the number $n_{v \rightarrow u}$ of nodes at distance $j$ from $v$ that lie on the side of edge $\{u,v\}$ that $v$ is on. The value $n_{v \rightarrow u}$ is computed as the sum of $n_w^v$ over all neighbors $w \neq u$ of $v$, plus $1$ to account for node $v$ itself.

\paragraph{Learning the start-of-phase tree.}

We now need to show that what a node receives by the end of phase $j$ is sufficient to build the tree $T$ it needs in the beginning of phase $j+1$. We will show that the tree $T$ we build (i) does not encode any sequence that does not exist in $G$ and (ii) encodes all sequences that exist in $G$.

Let us first build $T$. During phase $j$, node $v$ receives trees $T_{u \rightarrow v}$ from each neighbor $u$. Each tree $T_{u \rightarrow v}$ encodes a subtree of the tree $T^u$ that node $u$ started phase $j$ with. Trees $T^u$ are themselves subtrees of $G$. Thus, $v$ can merge trees $T_{u \rightarrow v}$ into a subtree $T$ of $G$.\footnote{Since positions of $v$ and $u$ are marked in each $T_{u \rightarrow v}$, then parts of the trees that overlap are correctly merged.}

Showing (i), i.e., all the sequences encoded in $T$ are also encoded in $G$, is trivial. Since $T$ is a subtree of $G$, then for every sequence in $T$ there exists a corresponding sequence in $G$. 

It remains to show (ii), i.e., no sequences in $G$ were omitted during our construction. Take any sequence $S$ that corresponds to some path $P$ in $G$ passing through $v$ and two\footnote{A path passing through only $1$ neighbor of $v$ is analyzed analogously.} neighbors $u_1,u_2$ of $v$ such that $P$ lies within $j+1$-hop neighborhood of $v$. We will show that the sequence $S$ also occurs in $T$. We split $S$ into two subsequences around the position of $v$: $S_1$ containing certificates before position of $v$ and certificate of $v$ itself and $S_2$ containing certificates after position of $v$ and certificate of $v$ itself. Without loss of generality assume that $S_1$ contains the position corresponding to $u_1$ and $S_2$ contains the position corresponding to $u_2$.

Notice that subsequence $S_1$ corresponds to nodes within $j$-hop neighborhood of $u_1$, thus $S_1$ occurs in $T^{u_1}$.
Now we show that $S_1$ is encoded in $T_{u_1 \rightarrow v}$ (proof for $S_2$ is analogous). Consider path $P_1$ in $T^{u_1}$ that corresponds to sequence $S_1$. If $P_1$ fully overlaps with some path on $T_{u_1 \rightarrow v}$, then $S_1$ is trivially encoded by $T_{u_1 \rightarrow v}$. Otherwise, path $P_1$ only partially overlaps with nodes in $T_{u_1 \rightarrow v}$ and at some point branches off the nodes represented in $T_{u_1 \rightarrow v}$. Let $w$ be the furthest node from $v$ on $P_1$ that is explicit in $T_{u_1 \rightarrow v}$. Consider any node $x$ on $P_1$ after $w$, i.e., such that $dist(x,v)>dist(x,w)$. Notice that $C_{adv}(x)=C'(x)$ as otherwise $x$ would be added and connected to the tree $T_{u_1 \rightarrow v}$. Thus, $C_{adv}(x)=C'(x)=C'(r)+dist(x,r)=C'(w)+dist(w,x)$, which is represented in $T_{u_1 \rightarrow v}$ by the length $L_w \geq dist(w,x)$ of the branch at $w$. Therefore, there exists a path in $T_{u_1 \rightarrow v}$ and a length $L_w$ of the branch at $w$ that together encode a path corresponding to the sequence $S_1$.


Thus, when node $v$ merges $T_{u_1 \rightarrow v}$ and $T_{u_2 \rightarrow v}$ (and other trees), it creates a path in $T$ that corresponds to $S$, which completes the proof of (ii).




\subsection{Round Complexity}

In this section, we calculate how many CONGEST rounds are required to perform all the transmissions described above.
Each node $v$ transmits the tree $T_{v \rightarrow u}$ and the number $n_{v \rightarrow u}$ to each neighbor $u$. The tree $T_{v \rightarrow u}$ is composed of
\begin{itemize}[leftmargin=*]
\setlength{\itemsep}{-1pt}
    \item path from $v$ to $r$, of length at most $j$ and the label $C'(r)$
    \item for every node $w$ such that $C_{adv}(w)\neq C'(w)$, we put at most $2$ nodes and edges in $T_{v \rightarrow u}$ and we store $C_{adv}(w)$
    \item for every explicit node $w$ in $T_{v \rightarrow u}$ we store the length of the longest branching from $w$ that does not go through any edge in $T_{v \rightarrow u}$
\end{itemize}

In total, node $v$ transmits up to $N=(j+1)+(2\varepsilon)$ nodes, $N-1$ edges with their weights, $1+\varepsilon$ certificates and additional $N$ numbers (lengths of branches) to transmit the tree $T_{v \rightarrow u}$. Furthermore, node $v$ transmits to each neighbor $u$ the number $n_{v \rightarrow u}$. In total, $O(\varepsilon)$ CONGEST messages are sufficient to make all the transmissions in $j$-th phase, which proves the round complexity in Theorem~\ref{thm:correctness-framework}.


\subsection{Decoding the Graph Sketch}

In this section we describe how to obtain the information necessary in Section~\ref{subsec:framework} from our construction. Recall the information we needed:
\begin{enumerate}[leftmargin=*]
\setlength{\itemsep}{-1pt}
    \item The number of nodes in $N_{2\varepsilon+1}(v)$ in $G$. \label{item:sketch-number-of-nodes}
    
    \item For each $u \in N_1(v)$, the adversarial label $C_{adv}(u)$ together with the port at $v$ leading to node $u$.%
    \label{item:sketch-neighbors}
    

    \item For every sequence of certificates $(c_1, c_2, \dots, c_k)$ with $k \le 4\varepsilon+3$, every pair of positions $m,j \in \{1, \dots, k\}$, and every pair $u_1, u_2$ of neighbors of $v$, the node $v$ knows whether the sequence is \emph{realizable at positions $m,j$ through $u_1, u_2$}: that is, whether there exists a path $P = (v_1, v_2, \dots, v_k)$ contained entirely within $N_{2\varepsilon+1}(v)$ such that \label{item:sketch-sequence} (i) $v$ occupies position $m$ in $P$, i.e.\ $v = v_m$, (ii)  a node $x$ with input $\mathcal{L}$ occupies position $j$ in $P$, i.e.\ $x = v_j$; if such a node does not exist then $j = \bot$, (iii) $u_1 = v_{m-1}$ whenever $m \ge 2$, and $u_2 = v_{m+1}$ whenever $m \le k-1$ (the path enters and leaves $v$ through the specified neighbors), and (iv) the adversarial certificates along $P$ match the sequence, i.e.\ $C_{adv}(v_i) = c_i$ for every $i$.

\end{enumerate}

After $2\varepsilon+1$ phases described in Subsection~\ref{subsec:sketch-construction}, each node $v$ knows a subtree $T$ of $G$ and its adversarial certificate assignment such that for every unique sequence of labels $S$ that occurs on some path $P$ in $G$ going through $v$ and up to $2$ neighbors of $v$, possibly with a position $i$ within the sequence marked for a node with input $\mathcal{L}$, there exists a path $P'$ in $T$ with the same sequence of labels going through $v$ and through the same neighbor(s) of $v$ with a node marked $\mathcal{L}$ (if applies) at the specified position $i$.

Furthermore, each node $v$ knows for each of its neighbors $u$, the number $n_{u \rightarrow v}$ of nodes at distance $j$ from $v$ that have node $u$ on the path towards $v$, for every $1 \leq j \leq 2\varepsilon+1$.

Thus, each node $v$ can calculate the graph sketch necessary in Section~\ref{subsec:framework}:
\begin{enumerate}
    \item The number of nodes in Item~\ref{item:sketch-number-of-nodes} is the sum of numbers $n_{u \rightarrow v}$ for all neighbors $u$ of $v$ across all the phases.
    \item The adversarial certificate and port number of each neighbor $u$ of $v$ in Item~\ref{item:sketch-neighbors} was learned in phase~$0$.
    \item The tree $T^v$ at each node $v$ known after the last phase\footnote{It may be convenient to think it is the tree known at the start of phase $2\varepsilon+1$, i.e., an artificial phase after the last phase.} contains all the same sequences of certificates (with positions of node $v$, its neighbors and possibly a node with input $\mathcal{L}$) as in the original graph $G$. Thus, each node $v$ can check for each sequence of certificates (with the positions of respective nodes of interest) in $T^v$ to know Item~\ref{item:sketch-sequence}.
\end{enumerate}

\begin{algorithm2e}
$T^v \gets$ the single node $v$ with certificate $C_{adv}(v)$\\
$n_u^v \gets 0$ for every neighbor $u$ of $v$\\

\For{$j = 0$ \KwTo $2\varepsilon$}{
  find a certificate assignment $C'$ on $T^v$ differing from the adversarial certificates of $T^v$ on at most $\varepsilon$ nodes, under which $\mathcal{A}_0$ makes every node of $T^v$ accept\\
  \lIf{no such $C'$ exists}{\textbf{halt and reject} (sketch rejection)}
  $r \gets$ the node of $T^v$ with the smallest certificate under $C'$\\
  \ForEach{neighbor $u$ of $v$}{
    $T_{v \rightarrow u} \gets$ the path from $v$ to $r$ in $T^v$, plus $u$ with the edge
    $\{u,v\}$, each edge of weight~$1$\\
    \ForEach{node $x$ of $T^v$ with $C'(x) \neq C_{adv}(x)$}{
      $y \gets$ the nearest node to $x$ represented in $T_{v \rightarrow u}$; if needed, make $y$
      explicit, splitting its weighted edge\\
      add $x$ and the edge $\{x,y\}$ with weight $dist_{T^v}(x,y)$; store
      $C_{adv}(x)$\\
    }
    \ForEach{explicit node $w$ of $T_{v \rightarrow u}$}{
      store the length of the longest path of $T^v$ from $w$ avoiding the edges of $T_{v \rightarrow u}$
    }
    send $\bigl(T_{v \rightarrow u},\ n_{v \rightarrow u} = 1+\sum_{w \neq u} n_w^v\bigr)$ to $u$\\
  }
  receive a tree and a count from every neighbor $u$; merge each received
  tree into $T^v$ by identifying $v$ and the edge $\{u,v\}$; update $n_u^v$ to
  the received count\\
}
return $T^v$ and $n_u^v$ for each neighbor $u$ of $v$ as the encoding of the sketch $S_G^v$
\caption{\textsc{sketch-construction}($\varepsilon$) at node $v$}
\label{alg:sketch}
\end{algorithm2e}

\section{Proof of Lemma \ref{prop:diff_view_dist}}
\label{sec:erroneous-labeling}

This section is devoted to the proof of Lemma \ref{prop:diff_view_dist}, the one component of the soundness argument (Lemma \ref{lemma:graph-without-property-prediction-model}) we deferred. 
Recall the claim: any two adjacent nodes $u$ and $v$, correcting the certificates they read independently and on separate imagined graphs $G^u$ and $G^v$, must nonetheless agree on each other's imagined certificates -- $C_{im}^u(u) = C_{im}^v(u)$ and $C_{im}^u(v) = C_{im}^v(v)$.
This local agreement is exactly what allows the individual imagined certifications to be stitched into a single consistent assignment on $G$ in the soundness proof.

The major issue is that adjacent nodes $u$ and $v$ might reconstruct different imagined trees $G^u$ and $G^v$ and choose different imagined certifications $C_{im}^u$ and $C_{im}^v$ on them. 
To relate what $u$ and $v$ imagine, we require a common ground, and the following notion of \emph{brothers} lets you do that by matching paths in $G^u$ and $G^v$ to a path in $G$.


Fix a node $w$ and let $P = (v_1, v_2, \dots, v_k)$ be a path in $G$ that passes through $w$ lying within $N_{2\varepsilon+1}(w)$.
By item~\ref{itm:one} of the sketch, $G_w$ contains at least one path $X = (x_1, x_2, \dots, x_k)$ carrying the same certificate sequence, i.e., $C_{sk}^w(x_j) = C_{adv}(v_j)$ for every $j$, with $w$ in the recorded position. We fix one such $X$ and call it a \emph{brother} of $P$ in $G_w$, and we call $x_j$ the brother of $v_j$ for every $j$.
The imagined certification $C_{im}^w$ assigns a certificate to each $x_j$ and let $a_j^w = C_{im}^w(x_j)$ be the certificates that $w$ imagines along $P$. The certificates $a_j^w$ satisfy two properties.
\begin{itemize}
    \item[\textbf{(P1)}] $a_j^w = C_{adv}(v_j)$ for all but at most $\varepsilon$ indices $j$.
    Indeed, $X$ lies within $G_w$, so $C_{im}^w$ differs from $C_{sk}^w$ on at most $\varepsilon$ of its nodes and $C_{sk}^w(x_j) = C_{adv}(v_j)$ by construction of $X$.
    \item[\textbf{(P2)}] 
    Under $C_{im}^w$, every node within $2\varepsilon$ hops from $w$ in $G_w$ accepts according to $\mathcal{A}_0$.
    This covers all of $x_1, \dots, x_k$ except possibly the two endpoints $x_1$ and $x_k$, which may fall outside $N_{2\varepsilon}(w)$.
\end{itemize}
We make the notion of brother symmetric and transitive, i.e., 
(i) if $P_u$ in $G^u$ is a brother of $P$ in $G$, then $P$ is also a brother of $P_u$, and
(ii) if $P_1$ is a brother of $P_2$ and $P_2$ is a brother of $P_3$, then $P_1$ is a brother of $P_3$.
This lets us relate paths across two different imagined trees $G^u$ and $G^v$, i.e., if $X_u$ in $G^u$ and $X_v$ in $G^v$ are both brothers of the same path $P$ of $G$, they are brothers of each other. 
We extend this relation to nodes, i.e., if $P$ and $P'$ are brothers, then the $j$-th node of $P$ and the $j$-th node of $P'$ are brothers too, for every $j$. 

\begin{remark}
\label{rem:brothers}
Every path $P_u$ through $u$ in $G^u$ has a brother in (original tree) $G$ and a brother in $G^v$ for every $v \neq u$. Similarly, any node $w$ in $G^u$ must have at least one brother in the original tree $G$ and at least one brother in $G^v$ for every $v \neq u$.
\end{remark}

See Figure~\ref{fig:brothers} for an illustration. Notice that there is no one-to-one relation between a path in $G$ and a path in $G_v$, nor between a node in $G$ and a node in $G_v$.

\begin{figure}
    \centering
    \includegraphics[width=0.65\linewidth]{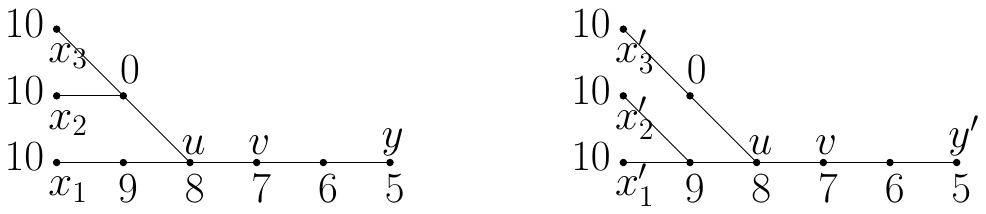}
    \caption{An example illustrating brothers. Let the left graph be the original graph $G$ with the adversarial certificates, while the right graph be the imagined graph $G_v$ with the sketch certificates. Notice that $G$ and $G_v$ have the same sequences of certificates and the same numbers of nodes at every distance $j$ from $v$, i.e., $G_v$ is an imagined graph of $G$. Paths $(x_1,\dots,y)$ in $G$, $(x_1',\dots,y')$ in $G_v$ and $(x_2',\dots,y')$ in $G_v$ are brothers of each other. Similarly, Paths $(x_2,\dots,y)$ in $G$, $(x_3,\dots,y)$ in $G$ and $(x_3',\dots,y')$ in $G_v$ are brothers of each other. Furthermore, nodes $x_1$ in $G$, $x_1'$ in $G_v$ and $x_2'$ in $G_v$ are brothers. 
    }
    \label{fig:brothers}
\end{figure}

With brothers in hand, we can compare what different nodes imagine.

The proof of Lemma~\ref{prop:diff_view_dist} combines two things. The first is a property of certifications accepted by $\mathcal{A}_0$, which is a distance labeling from its unique smallest-certificate node, its \emph{root}, so that once the root is located, every certificate is determined by a distance to it.
The second is the uniqueness of accepting certification along a path with fixed endpoints (Claim \ref{lem:path_labeling}), which allows us to transfer agreement from a few matched positions to the entire path. We first establish this claim, then prove the lemma by case analysis.


    \begin{remark}[Acceptable certification]
        \label{remark:labels-differ-by-1}
        Under any certificate assignment function $C'$ acceptable by $\mathcal{A}_0$, each node $v$ with certificate $C'(v)$ must have at most\footnote{If $C'(v)=0$, then there is no parent.} one adjacent node with certificate $C'(v)-1$ and all other neighbors must have certificate $C'(v)+1$.
    \end{remark}

    Remark~\ref{remark:labels-differ-by-1} follows directly from the description of the base algorithm $\mathcal{A}_0$.

    \begin{claim}
    \label{lem:path_labeling}
    
        Consider a path $P = (v_0, v_1, \ldots, v_k)$ and any fixed $a,b \in \mathbb{N}\cup \{0\}$.
        Then, there is at most one certificate assignment function $C$ on $P$ such that $C(v_0) = a$, $C(v_k) = b$, under which every node of $P$ accepts according to $\mathcal{A}_0$.
    \end{claim}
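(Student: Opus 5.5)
The plan is to show that an accepting certification on a path is forced to have a rigid "V-shape" (or monotone) profile, and that this profile is pinned down by its endpoint values. By Remark~\ref{remark:labels-differ-by-1}, under any assignment accepted by $\mathcal{A}_0$, adjacent nodes have certificates differing by exactly $1$. Hence along $P$ the sequence $C(v_0), C(v_1), \dots, C(v_k)$ is a walk with steps $\pm 1$.

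The key structural step is that this walk has no interior local maximum. Suppose $C(v_{i-1}) = C(v_{i+1}) = C(v_i) - 1$ for some $0<i<k$. Then $v_i$ has two neighbors with certificate $C(v_i)-1$, contradicting Remark~\ref{remark:labels-differ-by-1}, which allows at most one neighbor (the parent) with the smaller value. Consequently, once the walk takes an up-step it can never take a down-step afterwards. So the sequence first strictly decreases, possibly for zero steps, and then strictly increases, possibly for zero steps. Equivalently, there is an index $t \in \{0,\dots,k\}$ with
\[
C(v_i) = C(v_t) + |i - t| \quad \text{for all } i.
\]

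It remains to show that $t$ and $m := C(v_t)$ are determined by $a$ and $b$. From $a = m + t$ and $b = m + (k - t)$ we get $a - b = 2t - k$. Hence $t = (a - b + k)/2$ and $m = a - t$, both uniquely determined; if these are not integers in range, or $m<0$, no such assignment exists. Therefore $C$ on every node is determined, giving at most one assignment.

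The main point requiring care is the no-interior-maximum step. One should note that it uses only the constraint at interior nodes $v_i$, which do accept, and that Remark~\ref{remark:labels-differ-by-1} applies to any accepting node regardless of whether it is the leader (a leader with certificate $0$ has no smaller neighbor at all). Endpoints impose no constraint we need, so the argument goes through whether or not the endpoints "accept" within the full tree.
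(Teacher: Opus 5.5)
Your argument is correct, but it takes a somewhat different route from the paper's. The paper inducts on the length of the path. The global maximum $m$ of an accepting assignment cannot sit at an interior node, since that node would have no neighbor with certificate $m+1$. So it sits at an endpoint, say $v_0$; this forces $C(v_1)=a-1$, and deleting $v_0$ reduces to the shorter path with endpoint values $a-1$ and $b$.

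You instead use the local form of the same consequence of Remark~\ref{remark:labels-differ-by-1}: no interior node has both path-neighbors one smaller. This pins down the whole profile $C(v_i)=C(v_t)+|i-t|$ at once, and you then solve for $t$ and $C(v_t)$ from $a$ and $b$.

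The key fact is shared, but your version gives more. It yields an explicit formula. It also sidesteps a point the paper's induction leaves implicit: the endpoint carrying the maximum is the one with value $\max\{a,b\}$, so it is the same for every candidate assignment. That is what makes the paper's ``without loss of generality'' peeling step actually yield uniqueness.

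Your ``once the walk goes up it never comes down'' property is also exactly what Case~2 of the proof of Lemma~\ref{prop:diff_view_dist} uses without comment. There, agreement is carried beyond the second matched node (called $v_2$ in that proof) to the endpoint $u$ or $v$ of $P_v$. The claim, with both endpoints fixed, does not by itself provide this.

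Two small points. The step you compress into ``consequently'' deserves its one line: the first down-step after an up-step would create an interior local maximum. Your closing remark on endpoints is right: for $k\ge 2$ the $\pm1$ steps on the two end edges already follow from the constraints at $v_1$ and $v_{k-1}$, while $k\le 1$ is trivial.
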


\begin{proof}[Proof of Claim \ref{lem:path_labeling}]

If no accepting certification with the prescribed endpoints exists, then the claim holds trivially, so assume at least one does, and we show that it is unique. 
We argue by induction on the path length $k$.

        \textit{Inductive hypothesis $H(k')$ for $1 \leq k'<k$:} For any $a',b' \in \mathbb{N}\cup \{0\}$, for any path $P'$ of length $k'$ with endpoint nodes assigned the certificates $a'$ and $b'$, there exists at most one certificate assignment function $C$ under which every node on $P'$ accepts according to the algorithm $\mathcal{A}_0$.

        \textit{Base case $H(1)$:} A path $(u_0, u_1)$ with fixed endpoint labels $C(u_0) = a'$ and $C(u_1) = b'$ either satisfies the acceptance conditions of $\mathcal{A}_0$ or not. In either case, at most one certificate assignment function exists.
        Thus $H(1)$ holds.

        \textit{Inductive step ($H(k') \Rightarrow H(k'+1)$):}
        Let $P'=(u_0,u_1,\dots,u_{k'+1})$ be a path of length $k'+1$ with endpoint certificates $a',b'$ where $a' \neq 0$ or $b' \neq 0$. 
        If no proof satisfies these constraints while making all nodes accept under $\mathcal{A}_0$, then $H(k'+1)$ holds trivially. 
        
        Otherwise, let $C'$ be such a certificate assignment function on $P'$ and let $m = \max\{C'(u_0),\dots,$ $C'(u_{k'+1})\}$.
        We show that no internal node $u_j$ ($1 \leq j \leq k'$) can have certificate $m$.
        Suppose $C'(u_j) = m$ for some $1 \leq j \leq k'$. By Remark~\ref{remark:labels-differ-by-1}, node $u_j$ requires at most $1$ neighbor with the certificate $m-1$ and all others with $m+1$. 
        But both neighbors $u_{j-1}$ and $u_{j+1}$ satisfy $C'(u_{j-1}) \leq m$ and $C'(u_{j+1}) \leq m$ by maximality, so neither has the certificate $m+1$, which contradicts Remark~\ref{remark:labels-differ-by-1}. 
        Therefore, no internal node of $P'$ can have the certificate $m$.

        Consequently, $m$ appears only at endpoints of $P'$. Without loss of generality, let $C'(u_0)=m=a'$.
        Remark~\ref{remark:labels-differ-by-1} forces $C'(u_1) \in \{m-1, m+1\}$, and by maximality $C'(u_1) = m-1 = a'-1$.
        Deleting $u_0$ leaves the path $(u_1, \dots, u_{k'+1})$ of length $k'$ with fixed endpoint certificates $a'-1$ and $b'$.
        By $H(k')$ applied to  $(u_1, \dots, u_{k'+1})$ with endpoint labels $a'-1$ and $b'$, at most one proof of this subpath exists.
        Hence, at most one proof on $P'$ exists, establishing $H(k'+1)$.

        This completes the proof of Claim \ref{lem:path_labeling}.
\end{proof}

    We are now ready to prove Lemma~\ref{prop:diff_view_dist}.




\begin{proof}[Proof of Lemma~\ref{prop:diff_view_dist}]

Let $T_u$ denote the subgraph of $G^u$ induced by the nodes that are $2\varepsilon+1$ hops away from both $u$ and $v$, i.e., $N_{2\varepsilon+1}(u) \cap N_{2\varepsilon+1}(v)$ in $G^u$. 
If a node $w \in T_u$ is exactly $2\varepsilon+1$ hops away from either $u$ or $v$, then $w$ is called an \emph{outer node}, otherwise $w$ is an \emph{inner node}. Let node $r_u \in T_u$, called \emph{root}, be the node with the smallest imagined certificate, i.e., $C_{im}^u(r_u) = \min_{w \in T_u}C_{im}^u(w)$.\footnote{The root of $T_u$ does not have to be the leader. There may be no leader inside $T_u$, but $T_u$ still has the node with the smallest certificate and the certificate may be greater than $0$.}\footnote{There is exactly $1$ such node $r_u$. If there were multiple such nodes $r_u$ and $r_u'$, then somewhere on the path between them there would be a node $w$ that either has $2$ parents or that is adjacent to a node with the same label; in either case the node $w$ rejects according to $\mathcal{A}_0$, which contradicts our assumption that $C_{im}(u)$ is an imagined certification (imagined certification must be accepted by $\mathcal{A}_0$).}
We define analogously $T_v$ as a subgraph of $G^v$, inner and outer nodes in $T_v$, and a root node $r_v$ in $T_v$.

\begin{remark}
\label{rem:increasing_path}
    If $r_u$ is a root of $T_u$, then for any node $w \in T_u$ we have $C_{im}^u(w)=C_{im}^u(r_u)+dist(r_u,w)$.
\end{remark}

The remark above follows immediately from the fact that an imagined certification is accepted by $\mathcal{A}_0$, which means that the certificates increase by $1$ with each hop away from the leader.

We consider the following cases:
\begin{itemize}
    \item \textbf{Case 1:} $r_u$ is an inner node
    \item \textbf{Case 2:} $r_u$ is an outer node
\end{itemize}

\textbf{Case 1:} Since $r_u$ has the smallest certificate in $T_u$, then $r_u$ does not have a parent in $T_u$. The only nodes without parents in an acceptable certification must have certificate $0$, so $C_{im}^u(r_u)=0$. It follows that $C_{im}^u(u)=dist(r_u,u)$ and $C_{im}^u(v)=dist(r_u,v)$.

If a node in an acceptable certification has label $0$, then it must be the leader -- in this case, it means that node $r_u$ corresponds to the $\mathcal{L}$ position in some sequence of certificates in the graph sketch. Thus, according to Remark~\ref{rem:brothers}, $r_u$ is a brother of a node $r$ in $G$ such that $r$ received input $\mathcal{L}$. 

Furthermore, $r_u$ has a brother $w$ in $G^v$ in the same position relative to nodes $u$ and $v$, so $w$ is also an inner node in $T_v$. Since $w$ is a brother of $r_u$, then it corresponds to the $\mathcal{L}$ position, thus $C_{im}^v(w)=0$. Again, we get that $C_{im}^v(u)=dist(w,u)$ and $C_{im}^v(v)=dist(w,v)$. Since $w$ and $r_u$ occupy the same position in the corresponding sequence of labels in the graph sketch, then they are at the same distance from $u$ and they are at the same distance from $v$. Thus, $C_{im}^u(u)=dist(r_u,u)=dist(w,u)=C_{im}^v(u)$ and $C_{im}^u(v)=dist(r_u,v)=dist(w,v)=C_{im}^v(v)$, which proves the lemma in this case.

\textbf{Case 2:}  In this case, let us fix a path $P_u$ in $T_u$ that starts in $r_u$ and ends in $u$ or $v$ (whichever is further). Notice that $P_u$ contains $2\varepsilon+2$ nodes.\footnote{By definition of an inner node, $r_u$ is at distance $2\varepsilon+1$ from $u$ or $v$. Together with the node $u$ or $v$, this constitutes a path of length exactly $2\varepsilon+2$.} According to Remark~\ref{rem:brothers}, the path $P_u$ has a brother $P_v$ in $G^v$. Since they are brothers, they correspond to the same sequence $S$ of sketch labels.\footnote{Note that the brother $w$ in $P_v$ of root $r_u$ in $P_u$ does not necessarily have to be the root in $P_v$. They have the same sketch labels, but their imagined labels may differ significantly.}
Node $u$ modified up to $\varepsilon$ labels in $S$ to obtain $C_{im}^u$ on $P_u$ and node $v$ modified up to $\varepsilon$ labels in $S$ to obtain $C_{im}^v$ on $P_v$. In total up to $2\varepsilon$ modifications were introduced to $S$ out of $2\varepsilon+2$ labels in $S$. Thus, there are at least $2$ nodes $u_1,u_2 \in P_u$ with unchanged labels that have brothers $v_1,v_2 \in P_v$ also with unchanged labels, i.e., $C_{im}^u(u_1)=C_{sk}^u(u_1)=C_{sk}^v(v_1)=C_{im}^v(v_1)$ and $C_{im}^u(u_2)=C_{sk}^u(u_2)=C_{sk}^v(v_2)=C_{im}^v(v_2)$.  

According to Remark~\ref{rem:increasing_path}, since $r_u$ has the smallest label on $P_u$, we get that $C_{im}^u(x)=C_{im}^u(r_u)+dist(r_u,x)$ for all $x \in P_u$. In particular, $C_{im}^u(u_1)=C_{im}^u(r_u)+dist(r_u,u_1)$ and $C_{im}^u(u_2)=C_{im}^u(r_u)+dist(r_u,u_2)$. Their brothers $v_1$ and $v_2$ in $P_v$ have the same labels and positions as $u_1$ and $u_2$ in $P_u$. Without loss of generality assume $u_1$ is closer than $u_2$ to $r_u$. Thus, $C_{im}^v(v_1) = C_{im}^u(u_1) < C_{im}^u(u_2)= C_{im}^v(v_2)$. According to Claim~\ref{lem:path_labeling}, there is at most one acceptable certificate assignment function of nodes on the subpath $P_v' \subseteq P_v$ that starts in $v_1$ with label $C_{im}^v(v_1)$ and ends in $v_2$ with label $C_{im}^v(v_2)$. Notice that the certificate assignment function $C_{im}^v$ on the nodes starting from $v_1$, through $v_2$ and ending at $u$ or $v$ (whichever is further) must be the same as the certificate assignment function $C_{im}^u$ on the brothers of those nodes in $P_u$. In particular, $C_{im}^u(u)=C_{im}^v(u)$ and $C_{im}^u(v)=C_{im}^v(v)$, which proves the lemma in this case.

In both cases, the lemma follows.
\end{proof}

\begin{figure}[t]
    \centering
    \includegraphics[width=0.5\linewidth]{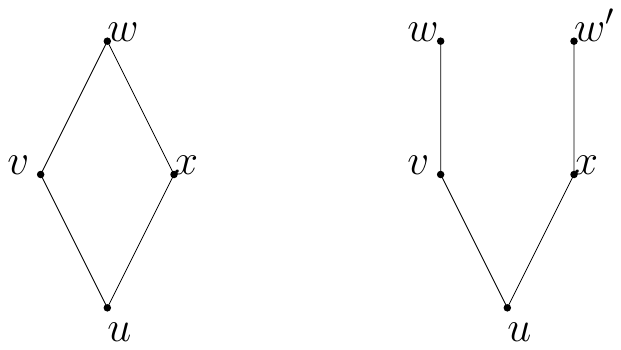}
    \caption{An example of a situation where node $u$ cannot decide whether nodes $w$ and $w'$ it learns about from $v$ and $x$ respectively are actually the same node. Thus, node $u$ cannot decide what is the exact structure of the graph.}
    \label{fig:cycle}
\end{figure}

\section{Future Works and Concluding Remarks}
\label{sec:conclusion}

We initiated the study of error-resilient local verification in the
bandwidth-restricted \textsc{Congest} model, using \ul{} on
trees as a first testbed. Our central message is that error-resilience does not
require a node to reconstruct its full $(2\varepsilon+1)$-hop neighborhood: a
compact $O(\varepsilon^2\log n)$-bit \emph{local graph sketch} captures exactly
the information the verifier consumes, letting each node decide via
\framework{} after only $O(\varepsilon^2)$ rounds of
\textsc{sketch-construction}. 

The most immediate open direction is to generalize beyond trees to arbitrary
graphs. The obstacle is cycles: as illustrated in Figure~\ref{fig:cycle}, when
two neighbors $v$ and $x$ each report information about a common node $w$ to $u$,
node $u$ cannot tell whether the two images refer to the same node, and
transmitting node identifiers to disambiguate is too costly in \textsc{Congest}.
Resolving this identity ambiguity within the sketch framework is the key
challenge for the general case.
In fact, a larger graph can be constructed to generalize this difficulty. 
The compression technique can be modified to distinguish nodes based on their IDs. 

More broadly, we believe the local graph sketch technique---and the accompanying
notions of imagined trees and imagined certifications---are of independent
interest. Summarizing a neighborhood compactly enough to compute on it, without
learning its exact structure, is a natural primitive for communication-efficient,
error-resilient certification, and we expect these ideas to extend to other
fundamental verification problems (cycle-freeness, spanning trees, and beyond) as
well as to the advice/prediction model, where auxiliary information may likewise
be imperfect.


\begin{table}[h!]
\centering
\renewcommand{\arraystretch}{1.15}
\begin{tabular}{|l|l|}
\hline
\textbf{Variable} & \textbf{Description}\\
\hline
$u,v, \dots, y$ & Node variables (possibly with superscripts and subscripts)\\ \hline

$a,b,c$ & Certificate variables (possibly with superscripts and subscripts)\\ \hline

$C$ & Certificate assignment function (possibly with superscripts and subscripts) \\ \hline

$i,j,k$ & Index variables \\ \hline

$T$ & Tree variable (possibly with superscripts and subscripts) \\ \hline

$P$ & Path variable (possibly with superscripts and subscripts) \\ \hline

$S$ & Set or Sequence (of certificates) variable (possibly with superscripts and subscripts) \\ \hline

$\mathcal{A}$ & Verifier (possibly with superscripts and subscripts)\\
\hline
\end{tabular}
\caption{List of Variables}
\label{table:variables}
\end{table}

\subsection*{Acknowledgment:} This work is supported by Polish National Science Centre project no.\ 2020/39/B/ST6/03288.

\bibliographystyle{plain}
\bibliography{bibliography}

@article{DBLP:journals/dc/KormanK07,
  author       = {Amos Korman and
                  Shay Kutten},
  title        = {Distributed verification of minimum spanning trees},
  journal      = {Distributed Comput.},
  volume       = {20},
  number       = {4},
  pages        = {253--266},
  year         = {2007},
  url          = {https://doi.org/10.1007/s00446-007-0025-1},
  doi          = {10.1007/S00446-007-0025-1},
  bibsource    = {dblp computer science bibliography, https://dblp.org}
}

@inproceedings{10.1145/1073814.1073817,
author = {Korman, Amos and Kutten, Shay and Peleg, David},
title = {Proof labeling schemes},
year = {2005},
isbn = {1581139942},
publisher = {Association for Computing Machinery},
address = {New York, NY, USA},
url = {https://doi.org/10.1145/1073814.1073817},
doi = {10.1145/1073814.1073817},
booktitle = {Proceedings of the Twenty-Fourth Annual ACM Symposium on Principles of Distributed Computing (PODC '05)},
pages = {9–18},
numpages = {10},
location = {Las Vegas, NV, USA},
series = {PODC '05}
}

@article{goos2016locally,
  title={Locally checkable proofs in distributed computing},
  author={G{\"o}{\"o}s, Mika and Suomela, Jukka},
  journal={Theory of Computing},
  volume={12},
  pages={1--33},
  year={2016},
  publisher={University of Chicago}
}

@InProceedings{bousquet_et_al:LIPIcs.DISC.2025.18,
  author =	{Bousquet, Nicolas and Feuilloley, Laurent and Zeitoun, S\'{e}bastien},
  title =	{{Complexity Landscape for Local Certification}},
  booktitle =	{39th International Symposium on Distributed Computing (DISC 2025)},
  pages =	{18:1--18:21},
  series =	{Leibniz International Proceedings in Informatics (LIPIcs)},
  ISBN =	{978-3-95977-402-4},
  ISSN =	{1868-8969},
  year =	{2025},
  volume =	{356},
  editor =	{Kowalski, Dariusz R.},
  publisher =	{Schloss Dagstuhl -- Leibniz-Zentrum f{\"u}r Informatik},
  address =	{Dagstuhl, Germany},
  URL =		{https://drops.dagstuhl.de/entities/document/10.4230/LIPIcs.DISC.2025.18},
  URN =		{urn:nbn:de:0030-drops-248350},
  doi =		{10.4230/LIPIcs.DISC.2025.18}
}

@inproceedings{ostrovsky2017space,
  title={Space-time tradeoffs for distributed verification},
  author={Ostrovsky, Rafail and Perry, Mor and Rosenbaum, Will},
  booktitle={International Colloquium on Structural Information and Communication Complexity},
  pages={53--70},
  year={2017},
  organization={Springer}
}

@article{bousquet2024local,
  title={Local certification of graph decompositions and applications to minor-free classes},
  author={Bousquet, Nicolas and Feuilloley, Laurent and Pierron, Th{\'e}o},
  journal={Journal of Parallel and Distributed Computing},
  volume={193},
  pages={104954},
  year={2024},
  publisher={Elsevier}
}

@inproceedings{10.1145/3732772.3733530,
author = {Boyar, Joan and Ellen, Faith and Larsen, Kim S.},
title = {Brief Announcement: Distributed Graph Algorithms with Predictions},
year = {2025},
isbn = {9798400718854},
publisher = {Association for Computing Machinery},
address = {New York, NY, USA},
url = {https://doi.org/10.1145/3732772.3733530},
doi = {10.1145/3732772.3733530},
booktitle = {Proceedings of the ACM Symposium on Principles of Distributed Computing},
pages = {322–325},
numpages = {4},
location = {Hotel Las Brisas Huatulco, Huatulco, Mexico},
series = {PODC '25}
}

@article{10.1145/2499228,
author = {Fraigniaud, Pierre and Korman, Amos and Peleg, David},
title = {Towards a complexity theory for local distributed computing},
year = {2013},
issue_date = {October 2013},
publisher = {Association for Computing Machinery},
address = {New York, NY, USA},
volume = {60},
number = {5},
issn = {0004-5411},
url = {https://doi.org/10.1145/2499228},
doi = {10.1145/2499228},
journal = {J. ACM},
month = oct,
articleno = {35},
numpages = {26}
}

@InProceedings{10.1007/978-3-319-12340-0_2,
author="Arfaoui, Heger
and Fraigniaud, Pierre
and Ilcinkas, David
and Mathieu, Fabien",
editor="Kratsch, Dieter
and Todinca, Ioan",
title="Distributedly Testing Cycle-Freeness",
booktitle="Graph-Theoretic Concepts in Computer Science",
year="2014",
publisher="Springer International Publishing",
address="Cham",
pages="15--28",
isbn="978-3-319-12340-0"
}

@book{peleg2000distributed,
  title={Distributed computing: a locality-sensitive approach},
  author={Peleg, David},
  year={2000},
  publisher={SIAM}
}

@InProceedings{feuilloley_et_al:LIPIcs.DISC.2018.25,
  author =	{Feuilloley, Laurent and Hirvonen, Juho},
  title =	{{Local Verification of Global Proofs}},
  booktitle =	{32nd International Symposium on Distributed Computing (DISC 2018)},
  pages =	{25:1--25:17},
  series =	{Leibniz International Proceedings in Informatics (LIPIcs)},
  ISBN =	{978-3-95977-092-7},
  ISSN =	{1868-8969},
  year =	{2018},
  volume =	{121},
  editor =	{Schmid, Ulrich and Widder, Josef},
  publisher =	{Schloss Dagstuhl -- Leibniz-Zentrum f{\"u}r Informatik},
  address =	{Dagstuhl, Germany},
  URL =		{https://drops.dagstuhl.de/entities/document/10.4230/LIPIcs.DISC.2018.25},
  URN =		{urn:nbn:de:0030-drops-98146},
  doi =		{10.4230/LIPIcs.DISC.2018.25}
}

@InProceedings{fraigniaud_et_al:LIPIcs.ITCS.2021.28,
  author =	{Fraigniaud, Pierre and Le Gall, Fran\c{c}ois and Nishimura, Harumichi and Paz, Ami},
  title =	{{Distributed Quantum Proofs for Replicated Data}},
  booktitle =	{12th Innovations in Theoretical Computer Science Conference (ITCS 2021)},
  pages =	{28:1--28:20},
  series =	{Leibniz International Proceedings in Informatics (LIPIcs)},
  ISBN =	{978-3-95977-177-1},
  ISSN =	{1868-8969},
  year =	{2021},
  volume =	{185},
  editor =	{Lee, James R.},
  publisher =	{Schloss Dagstuhl -- Leibniz-Zentrum f{\"u}r Informatik},
  address =	{Dagstuhl, Germany},
  URL =		{https://drops.dagstuhl.de/entities/document/10.4230/LIPIcs.ITCS.2021.28},
  URN =		{urn:nbn:de:0030-drops-135679},
  doi =		{10.4230/LIPIcs.ITCS.2021.28}
}

@inproceedings{10.1145/2767386.2767421,
author = {Baruch, Mor and Fraigniaud, Pierre and Patt-Shamir, Boaz},
title = {Randomized Proof-Labeling Schemes},
year = {2015},
isbn = {9781450336178},
publisher = {Association for Computing Machinery},
address = {New York, NY, USA},
url = {https://doi.org/10.1145/2767386.2767421},
doi = {10.1145/2767386.2767421},
booktitle = {Proceedings of the 2015 ACM Symposium on Principles of Distributed Computing},
pages = {315–324},
numpages = {10},
location = {Donostia-San Sebasti\'{a}n, Spain},
series = {PODC '15}
}

@inproceedings{DBLP:conf/wdag/EmekGK22,
  author       = {Yuval Emek and
                  Yuval Gil and
                  Shay Kutten},
  editor       = {Christian Scheideler},
  title        = {Locally Restricted Proof Labeling Schemes},
  booktitle    = {36th International Symposium on Distributed Computing, {DISC} 2022,
                  Augusta, Georgia, USA, October 25-27, 2022},
  series       = {LIPIcs},
  pages        = {20:1--20:22},
  publisher    = {Schloss Dagstuhl - Leibniz-Zentrum f{\"{u}}r Informatik},
  year         = {2022},
  url          = {https://doi.org/10.4230/LIPIcs.DISC.2022.20},
  doi          = {10.4230/LIPICS.DISC.2022.20},
  bibsource    = {dblp computer science bibliography, https://dblp.org}
}

@article{DBLP:journals/algorithmica/FraigniaudMRT24,
  author       = {Pierre Fraigniaud and
                  Pedro Montealegre and
                  Ivan Rapaport and
                  Ioan Todinca},
  title        = {A Meta-Theorem for Distributed Certification},
  journal      = {Algorithmica},
  volume       = {86},
  number       = {2},
  pages        = {585--612},
  year         = {2024},
  url          = {https://doi.org/10.1007/s00453-023-01185-1},
  doi          = {10.1007/S00453-023-01185-1},
  bibsource    = {dblp computer science bibliography, https://dblp.org}
}

@inproceedings{10.1145/3519270.3538416,
author = {Feuilloley, Laurent and Bousquet, Nicolas and Pierron, Th\'{e}o},
title = {What Can Be Certified Compactly? Compact local certification of MSO properties in tree-like graphs},
year = {2022},
isbn = {9781450392624},
publisher = {Association for Computing Machinery},
address = {New York, NY, USA},
url = {https://doi.org/10.1145/3519270.3538416},
doi = {10.1145/3519270.3538416},
booktitle = {Proceedings of the 2022 ACM Symposium on Principles of Distributed Computing},
pages = {131–140},
numpages = {10},
location = {Salerno, Italy},
series = {PODC'22}
}

@article{ESPERET202268,
title = {Local certification of graphs on surfaces},
journal = {Theoretical Computer Science},
volume = {909},
pages = {68-75},
year = {2022},
issn = {0304-3975},
doi = {https://doi.org/10.1016/j.tcs.2022.01.023},
url = {https://www.sciencedirect.com/science/article/pii/S0304397522000354},
author = {Louis Esperet and Benjamin Lévêque}}

@inproceedings{10.1145/3382734.3404505,
author = {Feuilloley, Laurent and Fraigniaud, Pierre and Montealegre, Pedro and Rapaport, Ivan and R\'{e}mila, \'{E}ric and Todinca, Ioan},
title = {Compact Distributed Certification of Planar Graphs},
year = {2020},
isbn = {9781450375825},
publisher = {Association for Computing Machinery},
address = {New York, NY, USA},
url = {https://doi.org/10.1145/3382734.3404505},
doi = {10.1145/3382734.3404505},
booktitle = {Proceedings of the 39th Symposium on Principles of Distributed Computing},
pages = {319–328},
numpages = {10},
location = {Virtual Event, Italy},
series = {PODC '20}
}

@article{FEUILLOLEY20239,
title = {Local certification of graphs with bounded genus},
journal = {Discrete Applied Mathematics},
volume = {325},
pages = {9-36},
year = {2023},
issn = {0166-218X},
doi = {https://doi.org/10.1016/j.dam.2022.10.004},
url = {https://www.sciencedirect.com/science/article/pii/S0166218X22003833},
author = {Laurent Feuilloley and Pierre Fraigniaud and Pedro Montealegre and Ivan Rapaport and Éric Rémila and Ioan Todinca}
}

@article{feuilloley2021introduction,
  title={Introduction to local certification},
  author={Feuilloley, Laurent},
  journal={Discrete Mathematics \& Theoretical Computer Science},
  volume={23},
  number={Distributed Computing and Networking},
  year={2021},
  publisher={Episciences. org}
}

@article{feuilloley2025proving,
  title={Proving there is a leader without naming it},
  author={Feuilloley, Laurent and Sedl{\'a}{\v{c}}ek, Josef Erik and Sl{\'a}vik, Martin},
  journal={arXiv preprint arXiv:2511.15491},
  year={2025}
}

@InProceedings{fraigniaud_et_al:LIPIcs.DISC.2023.20,
  author =	{Fraigniaud, Pierre and Mazoit, Fr\'{e}d\'{e}ric and Montealegre, Pedro and Rapaport, Ivan and Todinca, Ioan},
  title =	{{Distributed Certification for Classes of Dense Graphs}},
  booktitle =	{37th International Symposium on Distributed Computing (DISC 2023)},
  pages =	{20:1--20:17},
  series =	{Leibniz International Proceedings in Informatics (LIPIcs)},
  ISBN =	{978-3-95977-301-0},
  ISSN =	{1868-8969},
  year =	{2023},
  volume =	{281},
  editor =	{Oshman, Rotem},
  publisher =	{Schloss Dagstuhl -- Leibniz-Zentrum f{\"u}r Informatik},
  address =	{Dagstuhl, Germany},
  URL =		{https://drops.dagstuhl.de/entities/document/10.4230/LIPIcs.DISC.2023.20},
  URN =		{urn:nbn:de:0030-drops-191467},
  doi =		{10.4230/LIPIcs.DISC.2023.20}
}

@article{CENSORHILLEL2020112,
title = {Approximate proof-labeling schemes},
journal = {Theoretical Computer Science},
volume = {811},
pages = {112-124},
year = {2020},
note = {Special issue on Structural Information and Communication Complexit},
issn = {0304-3975},
doi = {https://doi.org/10.1016/j.tcs.2018.08.020},
url = {https://www.sciencedirect.com/science/article/pii/S030439751830536X},
author = {Keren Censor-Hillel and Ami Paz and Mor Perry}
}

@inproceedings{ahn2012analyzing,
  title={Analyzing graph structure via linear measurements},
  author={Ahn, Kook Jin and Guha, Sudipto and McGregor, Andrew},
  booktitle={Proceedings of the twenty-third annual ACM-SIAM symposium on Discrete Algorithms},
  pages={459--467},
  year={2012},
  organization={SIAM}
}

@inproceedings{ghaffari2018congested,
  title={Congested clique algorithms for the minimum cut problem},
  author={Ghaffari, Mohsen and Nowicki, Krzysztof},
  booktitle={Proceedings of the 2018 ACM Symposium on Principles of Distributed Computing},
  pages={357--366},
  year={2018}
}

@inproceedings{jurdzinski2018mst,
  title={MST in O (1) rounds of congested clique},
  author={Jurdzi{\'n}ski, Tomasz and Nowicki, Krzysztof},
  booktitle={Proceedings of the Twenty-Ninth Annual ACM-SIAM Symposium on Discrete Algorithms},
  pages={2620--2632},
  year={2018},
  organization={SIAM}
}

@article{garncarek2026distributed,
  title={Distributed Local Verification using Proofs with (out) Errors},
  author={Garncarek, Pawe\l{} and Jurdzinski, Tomasz and Kowalski, Dariusz and Pramanick, Subhajit},
  journal={arXiv preprint arXiv:2603.20831},
  year={2026}
}

@inproceedings{10.1145/3732772.3733518,
author = {Ben-David, Naama and Dzulfikar, Muhammad Ayaz and Ellen, Faith and Gilbert, Seth},
title = {Byzantine Agreement with Predictions},
year = {2025},
isbn = {9798400718854},
publisher = {Association for Computing Machinery},
address = {New York, NY, USA},
url = {https://doi.org/10.1145/3732772.3733518},
doi = {10.1145/3732772.3733518},
booktitle = {Proceedings of the ACM Symposium on Principles of Distributed Computing},
pages = {3–14},
numpages = {12},
location = {Hotel Las Brisas Huatulco, Huatulco, Mexico},
series = {PODC '25}
}

@article{lykouris2021competitive,
  title={Competitive caching with machine learned advice},
  author={Lykouris, Thodoris and Vassilvitskii, Sergei},
  journal={Journal of the ACM (JACM)},
  volume={68},
  number={4},
  pages={1--25},
  year={2021},
  publisher={ACM New York, NY}
}

\end{document}